\providecommand{\customgenericname}{}
\newcommand{\newcustomtheorem}[2]{%
	\newenvironment{#1}[1]
	{%
		\renewcommand\customgenericname{#2}%
		\renewcommand\theinnercustomgeneric{##1}%
		\innercustomgeneric
	}
	{\endinnercustomgeneric}
}
\newtheorem{theorem}{Theorem} 
 \newtheorem{definition}{Definition}
\newtheorem{algorithm}{Algorithm} \newtheorem{protocol}{Protocol}
\newcommand{\codepar}[1]{\ensuremath{[\![#1]\!]}}
\crefname{equation}{Eq.\!}{Eqs.\!}
\crefname{figure}{Fig.\!}{Figs.\!}
\mathchardef\mhyphen="2D
\begin{document}
	
\title{Adaptive syndrome measurements for Shor-style error correction}
	
\author{Theerapat Tansuwannont}
\email{t.tansuwannont@duke.edu}
\affiliation{
	Duke Quantum Center, Duke University, Durham, NC 27701, USA
}
\affiliation{
	Department of Electrical and Computer Engineering, Duke University, Durham, NC 27708, USA
}
\orcid{0000-0002-2865-0705}
	
\author{Balint Pato}
\email{balint.pato@duke.edu}
\affiliation{
	Duke Quantum Center, Duke University, Durham, NC 27701, USA
}
\affiliation{
	Department of Electrical and Computer Engineering, Duke University, Durham, NC 27708, USA
}
\orcid{0000-0001-9502-3368}
	
\author{Kenneth R. Brown}
\email{ken.brown@duke.edu}
\affiliation{
	Duke Quantum Center, Duke University, Durham, NC 27701, USA
}
\affiliation{
	Department of Electrical and Computer Engineering, Duke University, Durham, NC 27708, USA
}
\affiliation{
	Department of Physics, Duke University, Durham, NC 27708, USA
}
\affiliation{
	Department of Chemistry, Duke University, Durham, NC 27708, USA
}
\orcid{0000-0001-7716-1425}
	
\maketitle
	
\begin{abstract}
The Shor fault-tolerant error correction (FTEC) scheme uses transversal gates and ancilla qubits prepared in the cat state in syndrome extraction circuits to prevent propagation of errors caused by gate faults. For a stabilizer code of distance $d$ that can correct up to $t=\lfloor(d-1)/2\rfloor$ errors, the traditional Shor scheme handles ancilla preparation and measurement faults by performing syndrome measurements until the syndromes are repeated $t+1$ times in a row; in the worst-case scenario, $(t+1)^2$ rounds of measurements are required.
In this work, we improve the Shor FTEC scheme using an adaptive syndrome measurement technique. The syndrome for error correction is determined based on information from the differences of syndromes obtained from consecutive rounds.
Our protocols that satisfy the strong and the weak FTEC conditions require no more than $(t+3)^2/4-1$ rounds and $(t+3)^2/4-2$ rounds, respectively, and are applicable to any stabilizer code. 
Our simulations of FTEC protocols with the adaptive schemes on hexagonal color codes of small distances verify that our protocols preserve the code distance, can increase the pseudothreshold, and can decrease the average number of rounds compared to the traditional Shor scheme. We also find that for the code of distance $d$, our FTEC protocols with the adaptive schemes require no more than $d$ rounds on average.
\end{abstract}



\section{Introduction}
\label{sec:intro}

One essential component for constructing a large-scale quantum computer is quantum error correction (QEC). One has to make sure that the QEC process can be implemented fault-tolerantly so that a small number of faults in the process will not cause uncorrectable errors. It has been proved that a fault-tolerant error correction (FTEC) scheme and other schemes for fault-tolerant quantum computation (FTQC) can be used to simulate any quantum circuit with an arbitrarily low logical error rate if the physical error rate is below some scheme-dependent threshold value \cite{Shor96,AB08,Kitaev97,KLZ96,Preskill98,TB05,ND05,AL06,AGP06}. However, implementing an FTEC scheme is physically challenging because larger space and time overhead (ancilla qubits and quantum gates) are required for a lower logical error rate \cite{Steane03,PR12,CJL16b,TYC17}. Another reason is that for the same family of quantum error correcting codes (QECC), an FTEC protocol that requires more space and time overhead tends to have a lower fault-tolerant threshold since there are more possible fault combinations that can cause the protocol to fail \cite{AGP06}. Therefore, an FTEC scheme that requires only a small amount of overhead is desirable. 

The Shor FTEC scheme \cite{Shor96} is one of the very first FTEC schemes. It handles gate faults by measuring the stabilizer generators of a stabilizer code using the cat states and transversal gates. In addition, the Shor FTEC scheme handles ancilla preparation and measurement faults by repeated syndrome measurements; traditionally, full syndrome measurements are performed until the outcomes are repeated $t+1$ times in a row, where $t=\lfloor (d-1)/2 \rfloor$ is the number of errors that a stabilizer code of distance $d$ can correct. The Shor FTEC scheme satisfies the strong FTEC conditions \cite{AGP06} (to be defined in \cref{def:strong_FT}), so it is compatible with code concatenation (an FTEC scheme that only satisfies the weak FTEC conditions \cite{DR20}, defined in \cref{def:weak_FT}, is not compatible with code concatenation and only works at the highest level of concatenation). The Shor scheme is applicable to any stabilizer code and requires the number of ancilla qubits to be equal to the maximum weight of the stabilizer generators. In the worst-case scenario (when there are no more than $t$ faults in the scheme), the Shor scheme requires $(t+1)^2$ rounds of full syndrome measurements.

One way to reduce the time overhead required for an FTEC scheme similar to the Shor scheme is by using adaptive syndrome measurements, in which the measurement sequences depend on the prior measurement outcomes. Zalka \cite{Zalka96} constructed an adaptive Shor-style FTEC protocol for the \codepar{7,1,3} Steane code that uses Shor circuit for syndrome extraction. Delfosse and Reichardt \cite{DR20} further developed the adaptive measurement idea and constructed adaptive Shor-style FTEC protocols for any stabilizer code of distance 3, any Calderbank-Shor-Steane (CSS) codes of distance 3, and some stabilizer codes of distance $\leq 7$. Their protocols for different code families require different maximum numbers of syndrome bit measurements. One drawback of the FTEC protocols in \cite{DR20} is that they only satisfy the weak FTEC conditions and are not generally compatible with code concatenation (unless the code being used is a perfect code, a perfect CSS code, or the \codepar{16,4,3} color code invented in \cite{Reichardt18}). 


In this work, we develop Shor-style FTEC schemes with adaptive syndrome measurements that use information from the differences of syndromes obtained from any two consecutive rounds. The main results of this work are the following: (1) we construct an adaptive Shor-style FTEC scheme that satisfies the strong FTEC conditions (\cref{def:strong_FT}) and is applicable to \emph{any} stabilizer code. In our protocol, stabilizer generators are measured using Shor syndrome extraction circuits. For a protocol that can tolerate up to $t=\lfloor (d-1)/2 \rfloor$ faults (which is applicable to a stabilizer code of distance $d$), we prove that the number of required rounds of full syndrome measurements is no more than $(t+3)^2/4-1$. We also discuss some minor improvements that can further reduce the total number of syndrome bit measurements. (2) We construct an adaptive Shor-style FTEC scheme that satisfies the weak FTEC conditions (\cref{def:weak_FT}) and is applicable to any stabilizer code. The protocol requires at most $(t+2)^2/4$ rounds if the syndrome obtained from the first round is $\vec{s}_1\neq 0$, and requires at most $(t+3)^2/4-2$ rounds if $\vec{s}_1=0$. With some minor improvements, our protocol that satisfies the weak conditions and can correct up to 1 fault is similar to the adaptive FTEC protocol for a distance-3 stabilizer code proposed in \cite{DR20}. Our main results on the maximum number of rounds are summarized in \cref{tab:main_results}.
(3) We show that the lower bound of the fault-tolerant threshold for a concatenated code can be improved when the adaptive scheme satisfying the strong FTEC conditions is used instead of the traditional Shor scheme. We also compare both of our adaptive schemes with the traditional Shor scheme by simulating FTEC protocols on the hexagonal color codes of distance 3, 5, 7, and 9. Our numerical results verify that the adaptive schemes are fault tolerant, preserve the code distance, can increase the pseudothreshold, and can decrease the average number of rounds of syndrome measurements. We also observe that the average number of rounds for the strong and the weak schemes approach $d$ and $d-1$, respectively, as the physical error rate approaches $1$. 


\begin{table}[tbp]
	\begin{center}
		\begin{tabular}{| c | c |}
			\hline
			\multirow{2}{*}{Protocol} & The maximum number \\
			& of required rounds\\ 
			\hline
			Strong FTEC & $(t+3)^2/4-1$  \\
			\hline
			\multirow{2}{*}{Weak FTEC} & $(t+2)^2/4$ if $\vec{s}_1\neq 0$\\ \cline{2-2}
			& $(t+3)^2/4-2$ if $\vec{s}_1=0$\\
			\hline
			Traditional Shor \cite{Shor96} & $(t+1)^2$ \\
			\hline
		\end{tabular}
	\end{center}
	\caption{The maximum number of rounds required for the protocols in this work which satisfy the strong FTEC conditions (\cref{def:strong_FT}) and the weak FTEC conditions (\cref{def:weak_FT}) compared to the traditional Shor FTEC protocol when applying to a stabilizer code of distance $d$ that can correct up to $t=\lfloor (d-1)/2 \rfloor$ errors. Our protocols are applicable to any stabilizer code.}
	\label{tab:main_results}
\end{table}

This paper is organized as follows: In \cref{sec:Shor}, we formally define the strong and weak conditions for FTEC, and briefly review the traditional Shor FTEC scheme. In \cref{sec:strong_EC}, we introduce the notion of difference vector, construct an algorithm to find an error syndrome suitable for FTEC, and construct an FTEC protocol that satisfies the strong conditions. In \cref{sec:weak_EC}, we apply the algorithm from the previous section and construct an FTEC protocol that satisfies the weak conditions. 
In \cref{sec:comparison}, we compare our adaptive schemes developed in the previous sections with the traditional Shor scheme both analytically and numerically.
Our results and possible future directions are discussed in \cref{sec:discussion}.

\section{FTEC conditions and the traditional Shor FTEC scheme}
\label{sec:Shor}






A quantum \codepar{n,k,d} stabilizer code \cite{Gottesman96,Gottesman97} uses $n$ physical qubits to encode $k$ logical qubits and can correct an error of weight up to $\tau = \lfloor(d-1)/2\rfloor$, where $d$ is the code distance. A stabilizer code can be described by its corresponding stabilizer group, the Abelian group generated by $r=n-k$ commuting independent Pauli operators called stabilizer generators. The coding subspace is a simultaneous $+1$ eigenspace of all elements in the stabilizer group. In an ideal situation, if the weight of the error on a codeword is no more than $\tau$, a process of quantum error correction (QEC) can remove such an error. For a stabilizer code, the QEC process involves measuring the eigenvalues of all stabilizer generators on the corrupted codeword. The combined measurement results, called error syndrome, will be used to find an EC operator for undoing the corruption. 

In practice, however, any quantum gate involved in the syndrome measurements can be faulty. In this work, we will assume the standard \emph{depolarizing noise model} in which each one-qubit gate is followed by a single-qubit Pauli error ($I, X, Y,$ or $Z$), each two-qubit gate is followed by a two-qubit Pauli error of the form $P_1\otimes P_2$ (where $P_1,P_2 \in \{I,X,Y,Z\}$), and each single qubit measurement (which outputs a classical bit of information) is followed by either no error or a bit-flip error. Note that an error from each fault in the QEC process may propagate to other qubits (depending on the circuit being used in the syndrome measurement) and become an error of higher weight on the data block; i.e., a few faults may lead to the total error of weight more than $\tau$, causing the QEC process to fail. To prevent such cases from happening, we want to make sure that the QEC protocol being used is \emph{fault tolerant}; vaguely speaking, if the weight of an input error plus the number of faults in the FTEC protocol is small enough, we want to make sure that the output state is logically correct and has an error of weight no more than the total number of faults in the FTEC protocol. For a stabilizer code that can correct errors up to weight $\tau$, one might want to construct an FTEC protocol that can tolerate up to $t$ faults where $t$ is as close to $\tau$ as possible.


We can define the strong conditions for FTEC as follows:

\begin{definition}{Strong conditions for fault-tolerant error correction \cite{AGP06}}
	
	Let $t \leq \lfloor (d-1)/2\rfloor$ where $d$ is the distance of a stabilizer code. An error correction protocol is \emph{strongly $t$-fault tolerant} if the following two conditions are satisfied:
	\begin{enumerate}
		\item Error correction correctness property (ECCP): For any input codeword with error of weight $r$, if $s$ faults occur during the protocol with $r+s \leq t$, ideally decoding the output state gives the same codeword as ideally decoding the input state.
		\item Error correction recovery property (ECRP): If $s$ faults occur during the protocol with $s \leq t$, regardless of the weight of the error on the input state, the output state differs from any valid codeword by an error of weight at most $s$.
	\end{enumerate}
	\label{def:strong_FT}
\end{definition}

(Note that \cref{def:strong_FT} can be further generalized by defining $r$ as the number of faults that causes the input error, as proposed in \cite{TL22}. For an FTEC scheme in which stabilizer generators are measured using cat states and transversal gates similar to the Shor FTEC scheme, however, there is no difference in the uses of \cref{def:strong_FT} and the generalized definition in \cite{TL22} since a single gate fault in each generator measurement will lead no more than weight 1 error on the data qubits.)

\cref{def:strong_FT} is one of the main ingredients to prove the \emph{threshold theorem} in \cite{AGP06}, a theorem which shows that an arbitrarily low logical error rate can be attained through code concatenation if the physical error rate is below some threshold value.
It should be noted that for an FTEC protocol satisfying the strong FTEC conditions when the weight of the input error is large ($r+s >t$) but the number of faults is small ($s \leq t$), ECRP guarantees that the output state will be in the `correctable' subspace, but the input and the output states might not be logically the same. This property is necessary for constructing a conventional FTEC protocol for a concatenated code; for a code with two levels of concatenation, an FTEC protocol for the 2nd-level code is constructed from an FTEC protocol for the 1st-level code by replacing every physical qubit with a code block and replacing every physical gate with its corresponding logical gate. To do error correction, the 1st-level FTEC protocol is applied on each code block, and the 2nd-level FTEC protocol is applied afterwards. ECRP guarantees that an error on each code block after applying the 1st-level protocol can be corrected by the 2nd-level FTEC protocol. The idea can also be extended to a code with more levels of concatenation; see \cite{AGP06}.


For some families of codes, a code of high distance can be obtained without code concatenation. Surface codes \cite{Kitaev97,BK98} and color codes \cite{BM06} are examples of topological codes in which the code distance can be made arbitrarily large by increasing the lattice size. For such code families, an arbitrarily low logical error rate can be attained without code concatenation if the physical error rate is below some threshold value. In that case, there is no need to guarantee the weight of the output error for high-weight input errors with $r+s > t$. To achieve fault tolerance, it is sufficient to show that an FTEC protocol for such code families satisfies the following weak conditions for FTEC:

\begin{definition}{Weak conditions for fault-tolerant error correction \cite{DR20}}
	
	Let $t \leq \lfloor (d-1)/2\rfloor$ where $d$ is the distance of a stabilizer code. An error correction protocol is \emph{weakly $t$-fault tolerant} if the following two conditions are satisfied:
	\begin{enumerate}
		\item ECCP: For any input codeword with error of weight $r$, if $s$ faults occur during the protocol with $r+s \leq t$, ideally decoding the output state gives the same codeword as ideally decoding the input state.
		\item ECRP: For any input codeword with error of weight $r$, if $s$ faults occur during the protocol with $r+s \leq t$, the output state differs from any valid codeword by an error of weight at most $s$.
	\end{enumerate}
	
	\label{def:weak_FT}
\end{definition}



The Shor FTEC scheme \cite{Shor96} is an example of an FTEC scheme that satisfies the strong FTEC conditions. The details of the traditional Shor scheme are as follows: Suppose that a stabilizer generator being measured is a Pauli operator of weight $w$ of the form $M = P_1\otimes P_2 \otimes \cdots \otimes P_w$. An eigenvalue of the stabilizer generator is measured by first preparing ancilla qubits in a \emph{cat state} of the form $\frac{1}{\sqrt{2}}(|0\rangle^{\otimes w}+|1\rangle^{\otimes w})$, then applying controlled-$P_1$, controlled-$P_2$, ..., controlled-$P_w$ gates; see \cref{fig:Shor} for an example. Afterward, Hadamard gates are applied transversally to the ancilla qubits, which are measured in the $Z$ basis. The even and odd parities of the measurement results of ancilla qubits correspond to $+1$ and $-1$ eigenvalues of $M$, respectively. For convenience, we will call a circuit for measuring an eigenvalue of a stabilizer generator in this form the \emph{Shor syndrome extraction circuit}. (Note that the cat state used in the Shor syndrome extraction circuit must be prepared fault-tolerantly; i.e., if there are $s\leq t$ faults during the cat state preparation, the resulting cat state must differ from an ideal cat state by an error of weight no more than $s$. This can be done by using the ancilla verification method in \cite{Shor96} or the ancilla decoding and measurement method in \cite{DA07}.)


Since some faults may lead to an incorrect measurement outcome, the full syndrome measurement will be performed repeatedly. In the traditional Shor FTEC scheme, the syndromes will be measured until they are repeated $t+1$ times in a row. By doing so, we can make sure that if there are no more than $t$ faults in the whole protocol, there is at least one correct round in the last $t+1$ rounds with the same syndrome, and the repeated syndrome corresponds to the data error at the end of the correct round. An EC operator to be applied is a Pauli operator of the minimum weight whose syndrome is the repeated syndrome. Here we will call the process of selecting the syndrome for error correction using the aforementioned criteria \emph{Shor decoder}.


\begin{figure}[tbp]
	\centering
	\includegraphics[width=0.45\textwidth]{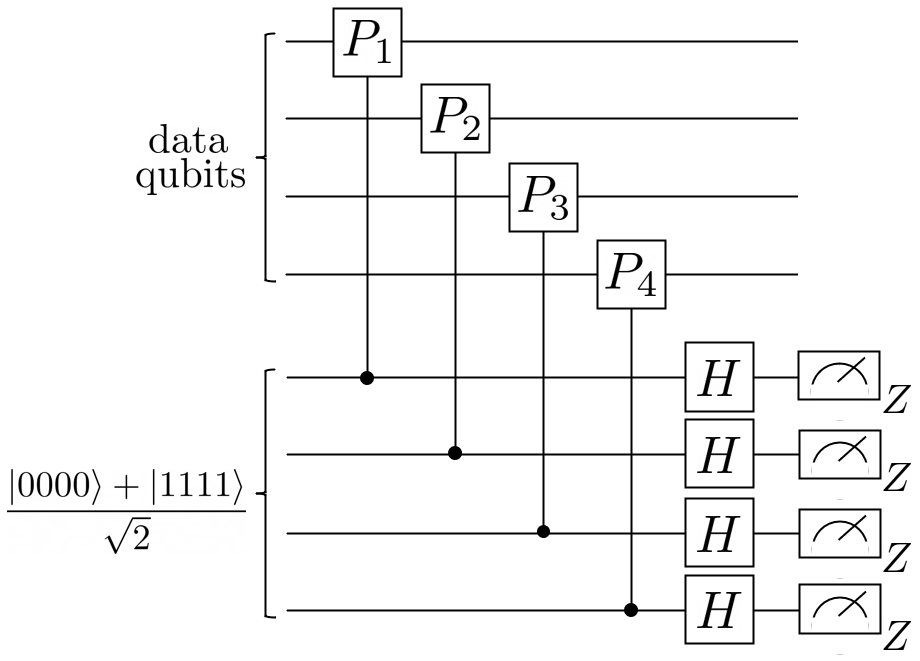}
	\caption{Shor syndrome extraction circuit for measuring a stabilizer generator of the form $M=P_1\otimes P_2 \otimes P_3 \otimes P_4$. The ancilla qubits are initially prepared in the cat state $\frac{1}{\sqrt{2}}(|0000\rangle+|1111\rangle)$ and measured in the $Z$ basis at the end. Even and odd parities of the measurement results correspond to $+1$ and $-1$ eigenvalues of $M$.}
	\label{fig:Shor}
\end{figure}


It is not hard to verify that both conditions in \cref{def:strong_FT} are satisfied with $t=\tau=\lfloor (d-1)/2\rfloor$; because the controlled Pauli gates are applied transversally between the data block and the ancilla qubits, each single gate fault will lead to an error of weight $\leq 1$ on the data block or the ancilla qubits (or both). Also, any errors on the ancilla qubits and measurement faults can be handled by repeatedly performing syndrome measurements. When $s \leq t$, the repeated syndrome is the syndrome of the input error plus any error that occurs before the last correct round in the last $t+1$ rounds with the same syndrome. Therefore, after applying the EC operator corresponding to the repeated syndrome, the output state differs from an uncorrupted logical state by an error of weight $\leq s$ (an error that may arise from some faults after the last correct round). The input and the output states are always logically the same when $r+s \leq t$, but the states may be logically different when $r+s > t$. For these reasons, both ECCP and ECRP in \cref{def:strong_FT} are satisfied.

Any quantum circuit can be fault-tolerantly simulated with arbitrarily low logical error rates using an FTEC scheme together with fault-tolerant gadgets for quantum gates, state preparation, and state measurement. However, a lower logical error rate requires more overhead (ancilla qubits and quantum gates). One drawback of the traditional Shor scheme is that the number of required ancilla qubits is equal to the maximum weight of stabilizer generators; this is because of the Shor syndrome extraction circuit. Another drawback is that the traditional Shor scheme requires repeated syndrome measurements. Suppose that there are $t$ faults in the protocol, in the worst-case scenario, $(t+1)^2$ rounds of the syndrome measurements must be performed before the syndromes are repeated $t+1$ times in a row; this is because of the Shor decoder.

There are several FTEC schemes whose syndrome extraction circuits require fewer ancilla qubits compared to the Shor syndrome extraction circuit. Examples of such FTEC schemes are the flag FTEC schemes, in which a few `flag' ancilla qubits are used to detect faults that can lead to a high-weight error on the data block \cite{CR17a}. The flag FTEC scheme for a general stabilizer code of distance $d$ requires $d+1$ ancillas \cite{CR20}, while the schemes for certain families of codes may require fewer \cite{CR17a,CB18,TCL20,CKYZ20,CZYHC20,TL21,TL22}. However, to handle syndrome measurement faults, the flag scheme still requires repeated syndrome measurements which use ideas similar to the Shor decoder.



The main goal of this work is to construct a better algorithm for finding a syndrome suitable for FTEC which requires fewer rounds of syndrome measurements compared to the Shor decoder in the traditional Shor scheme (where the syndrome measurements are performed until the syndromes are repeated $t+1$ times in a row). In \cref{sec:strong_EC,sec:weak_EC}, we will develop FTEC protocols satisfying the strong FTEC conditions (\cref{def:strong_FT}) and the weak FTEC conditions (\cref{def:weak_FT}), respectively. 
In our protocols, we will assume that stabilizer generators are measured using the Shor syndrome extraction circuits and focus on reducing the number of required rounds. 
The processes of selecting the syndrome for error correction in these protocols will be referred to as \emph{adaptive strong and adaptive weak decoders}.






\section{Adaptive measurements for Shor error correction satisfying the strong FTEC conditions}
\label{sec:strong_EC}

In this section, we will construct an FTEC protocol that satisfies the strong FTEC conditions (\cref{def:strong_FT}). Stabilizer generators will be measured using the Shor syndrome extraction circuits, so any single gate fault will cause an error of weight $\leq 1$ on the data block or one of the ancilla qubits (or both). The main difference between our protocol and the traditional Shor FTEC scheme is that we will perform the syndrome measurements in an `adaptive' way; instead of measuring until the syndromes are repeated $t+1$ times in a row (where $t$ is the number of faults that the protocol can correct), the condition to stop the measurement sequences will change dynamically depending on syndromes collected from all rounds. We call this kind of procedure \emph{adaptive measurements} because of its similarity to the measurement procedure proposed by Delfosse and Reichardt in \cite{DR20} (later in \cref{sec:weak_EC}, readers will find that our FTEC protocol satisfying the weak FTEC conditions is similar to the FTEC protocol in \cite{DR20} when applying to a stabilizer code of distance 3).



To make sure that both ECCP and ECRP in \cref{def:strong_FT} are satisfied, we will use the following ideas: given the whole syndrome history, we will try to find a syndrome $\vec{s}_i$ obtained from round $i$ that is \emph{correct}, i.e., it corresponds to the data error at the end of round $i$. Finding such a syndrome should be possible regardless of the weight of the input error. Let $r$ be the weight of the input error, $s$ be the number of faults in the protocol, and $t=\lfloor (d-1)/2\rfloor$ be the weight of error that a stabilizer code of distance $d$ can correct. Suppose that $s \leq t$. If such a syndrome $\vec{s}_i$ can be found, combining the input error, the error from faults occurred up to round $i$, and the EC operator corresponding to $\vec{s}_i$ will result in a logical operator; it is always a trivial logical operator (i.e., a stabilizer) when $r+s \leq t$ and it can be a nontrivial logical operator when $r+s > t$. After applying the EC operator, the output state will differ from an uncorrupted logical state by an error from faults that occur after round $i$ (the error weight is always $\leq s$). If the procedure explained above can be done, both ECCP and ECRP in \cref{def:strong_FT} are satisfied.


\subsection{Difference vectors for single-fault cases and an FTEC protocol satisfying the strong FTEC conditions for a stabilizer code of distance 3}
\label{subsec:diff_single}



Our algorithm for finding a syndrome suitable for error correction will use the information from the \emph{differences of syndromes between any two consecutive rounds}. First, let us consider how a single fault can affect the differences between the syndrome from the round that the fault occurs, and the syndromes from the rounds before and after.



Suppose that in each round of full syndrome measurements, stabilizer generators are measured sequentially. Let $\vec{s}_j$ denote the syndrome obtained from the $j$-th round of full syndrome measurements, and assume that a single fault occurs on the $i$-th round (an input error of weight 1 can be considered as a single data-qubit fault on the $0$-th round).
\begin{enumerate}
	\item Let $E$ denote the data error at the end of the $(i-1)$-th round. If a single fault during a generator measurement on the $i$-th round causes a data error $F$, subsequent generator measurements in the same round may or may not be able to detect the newly occurred error. $F$ may be fully detectable ($\vec{s}_{i}$ is exactly the syndrome of $E\cdot F$), partially detectable (some part of $\vec{s}_{i}$ represents the syndrome of $E\cdot F$, and the other part represents the syndrome of $E$), or undetectable ($\vec{s}_{i}$ represents the syndrome of $E$ only). Thus, $\vec{s}_{i}$ may or may not be the syndrome of the data error at the end of the $i$-th round. Nevertheless, $F$ will be fully detectable by the syndrome measurements at the $(i+1)$-th round, so $\vec{s}_{i+1}$ is the syndrome of the data error at the end of the $i$-th round (which is the syndrome of $E\cdot F$ in this case).
	\item If a single fault during a generator measurement causes an error on the ancilla qubits or is an ancilla measurement fault, it may cause a single bit-flip on $\vec{s}_{i}$, so $\vec{s}_{i}$ may or may not be the syndrome of the data error at the end of the $i$-th round. Nevertheless, $\vec{s}_{i+1}$ is the syndrome of the data error at the end of the $i$-th round.
\end{enumerate}
Possible single faults can be categorized into three types depending on their effects on the syndromes:
\begin{enumerate}
	\item Type I: a single fault on the $i$-th round that causes $\vec{s}_{i-1}\neq \vec{s}_i \neq \vec{s}_{i+1}$. Examples of Type I faults are an ancilla measurement fault and a fault leading to a data error partially detectable by generator measurements in the $i$-th round.
	\item Type II: a single fault on the $i$-th round that causes $\vec{s}_{i-1}= \vec{s}_i \neq \vec{s}_{i+1}$. An example of a Type II fault is a fault leading to a data error undetectable by generator measurements in the $i$-th round.
	\item Type III: a single fault on the $i$-th round that causes $\vec{s}_{i-1}\neq \vec{s}_i = \vec{s}_{i+1}$. An example of a Type III fault is a fault leading to a data error fully detectable by generator measurements in the $i$-th round.
\end{enumerate}
Since for any Type III fault on the $i$-th round, there is a Type II fault on the $(i-1)$-th round that causes the same data error, it is safe to consider only faults of Types I and II (a fully detectable data error from a fault on the 1st round is equivalent to an input error). Note that any single fault on the $i$-th round cannot cause $\vec{s}_{i-1}=\vec{s}_i=\vec{s}_{i+1}$ unless the data error is trivial. This is because the $(i+1)$-th round of syndrome measurements can always detect a data error of weight 1 from the $i$-th round when the code distance is $d \geq 3$. This is also true in the case of multiple faults because a data error of weight $\leq t$ from the the $i$-th round are always detectable by the $(i+1)$-th round of syndrome measurements when the code distance is $d \geq 2t+1$, useless the data error is trivial.

For convenience, we will define a difference vector from a sequence of syndrome measurement results as follows:
\begin{definition}{Difference vector}
	
	Let $m$ be the total number of rounds of full syndrome measurements, and let $\vec{s}_i$ denote the error syndrome obtained from the $i$-th round. The difference vector $\vec{\delta}$ is an $(m-1)$-bit string in which the $i$-th bit $\delta_i$ is 0 if $\vec{s}_{i+1}=\vec{s}_i$, or $\delta_i$ is 1 if $\vec{s}_{i+1}\neq\vec{s}_i$.
	\label{def:dif_vec}
\end{definition}
Let $\mathrm{I}(i)$ and $\mathrm{II}(i)$ denote single faults of Types I and II on the $i$-th round, where $i=1,\dots,m$ and $m$ is the total number of rounds. By \cref{def:dif_vec}, the difference vector of length $m-1$ corresponding to each fault type is the following:
\begin{enumerate}
	\item For $\mathrm{I}(1)$, $\vec{\delta}=1\;0\dots 0\;0$.
	\item For $\mathrm{I}(i)$ ($i\neq 0$ or $m$), $\vec{\delta}=0\dots 0\underset{i-1}{1}\underset{i}{1}\;0\dots 0$.
	\item For $\mathrm{I}(m)$, $\vec{\delta}=0\;0\dots 0\;1$.
	\item For $\mathrm{II}(i)$ ($i=1,\dots,m-1$), $\vec{\delta}=0\dots 0\;\underset{i}{1}\;0\dots 0$.
	\item For $\mathrm{II}(m)$, $\vec{\delta}=0\;0\dots 0\;0$.
\end{enumerate}
(For an input error which may be denoted by $\mathrm{II}(0)$, the difference vector is the zero vector.)

To see how a difference vector can be used to determine a syndrome suitable for error correction, let us consider an FTEC protocol correcting up to $t=1$ fault as an example. When only $\mathrm{I}(i)$ occurs, the syndrome $\vec{s}_i$ is the only syndrome that cannot be used for error correction since it might not correspond to the data error at the end of any round. On the other hand, when only $\mathrm{II}(i)$ occurs, the syndrome from any round can be used for error correction; $\vec{s}_{i}=\vec{s}_{i-1}=\dots$ corresponds to the data error at the end of the $(i-1)$-th round (which is trivial), while $\vec{s}_{i+1}=\vec{s}_{i+2}=\dots$ corresponds to the data error at the end of the $i$-th round (which is the data error caused by $\mathrm{II}(i)$). 

In actual syndrome measurements, we cannot perfectly distinguish between Type I and Type II faults using the difference vector as some faults of different types can give the same difference vector (for example, $\mathrm{I}(1)$ and $\mathrm{II}(1)$, and $\mathrm{I}(m)$ and $\mathrm{II}(m-1)$). Nevertheless, whenever we find $\delta_i=0$, we are certain that neither Type I nor Type II fault occurs on the $i$-th round. That is, $\vec{s}_i$ is \emph{usable} for error correction if $\delta_i=0$ ($\vec{s}_{i+1}=\vec{s}_i$ is also usable). Another case that a usable syndrome can be found is whenever $\vec{\delta}$ has a substring 11, which implies that one fault already occurred on some round before the $m$-th round (the latest round). In that case, we can do error correction using the syndrome obtained from the $m$-th round. Using these facts, an FTEC protocol satisfying the strong FTEC conditions with $t=1$ can be constructed as follows:

\begin{protocol}{FTEC protocol satisfying the strong FTEC conditions for a stabilizer code of distance 3}
	
	In each round of full syndrome measurements, measure stabilizer generators using the Shor syndrome extraction circuits. After the $j$-th round ($j\geq 2$), calculate the $(j-1)$-th bit of the difference vector. Repeat syndrome measurements until one of the following conditions is satisfied, then perform error correction using the error syndrome corresponding to each condition:
	\begin{enumerate}
		\item If $\delta_i=0$ is found after the $(i+1)$-th round, stop the syndrome measurements. Perform error correction using the syndrome $\vec{s}_i$.
		\item If $\vec{\delta}$ contains a substring $11$, stop the syndrome measurements. Perform error correction using the syndrome obtained from the latest round.
	\end{enumerate}  
	\label{pro:strong_t_1} 
\end{protocol}

\begin{table*}[tbp]
	\begin{center}
		\begin{tabular}{| c | c | c | c |}
			\hline
			\multirow{2}{*}{Fault type} & difference vector & syndromes suitable & syndrome to be used \\
			& $\vec{\delta}$ &  for error correction & for error correction \\
			\hline
			Input error &  $0\;0$ & $\vec{s}_1,\vec{s}_2,\vec{s}_3$ & $\vec{s}_1$ \\
			\hline
			$\mathrm{I}(1)$ &  $1\;0$ & $\vec{s}_2,\vec{s}_3$ & $\vec{s}_2$ \\
			\hline
			$\mathrm{I}(2)$ &  $1\;1$ & $\vec{s}_1,\vec{s}_3$ & $\vec{s}_3$ \\
			\hline
			$\mathrm{I}(3)$ &  $0\;1$ & $\vec{s}_1,\vec{s}_2$ & $\vec{s}_1$ \\
			\hline
			$\mathrm{II}(1)$ &  $1\;0$ & $\vec{s}_1,\vec{s}_2,\vec{s}_3$ & $\vec{s}_2$ \\
			\hline
			$\mathrm{II}(2)$ &  $0\;1$ & $\vec{s}_1,\vec{s}_2,\vec{s}_3$ & $\vec{s}_1$ \\
			\hline
			$\mathrm{II}(3)$ &  $0\;0$ & $\vec{s}_1,\vec{s}_2,\vec{s}_3$ & $\vec{s}_1$ \\
			\hline
		\end{tabular}
	\end{center}
	\caption{All possible single faults, their corresponding difference vectors, syndromes suitable for error correction, and the syndromes that will be used for error correction according to \cref{pro:strong_t_1}, assuming that full syndrome measurements are performed 3 rounds in total.}
	\label{table:t_1}
\end{table*}

Suppose that the total number of rounds in the protocol is 3. All possible single faults, their corresponding difference vectors, syndromes suitable for error correction, and the syndromes that will be used for error correction according to our protocol are displayed in \cref{table:t_1}. In fact, 3 is the smallest number of rounds required to make sure that a usable syndrome exists; 2 rounds are not sufficient since $\mathrm{I}(1)$ and $\mathrm{I}(2)$ give the same $\vec{\delta}=1$ but they cannot be distinguished, and neither $\vec{s}_1$ nor $\vec{s}_2$ works for both cases. 3 is also the number of rounds of syndrome measurements in the worst-case scenario of our protocol for $t=1$; i.e., the total number of rounds is at most 3 in any case.





\subsection{Difference vectors for multiple-fault cases and an FTEC protocol satisfying the strong FTEC conditions for a stabilizer code of any distance}
\label{subsec:diff_multiple}



In this section, we will extend our method for finding a syndrome suitable for error correction to the case of multiple faults so that an FTEC protocol satisfying the strong FTEC conditions for a stabilizer code of any distance can be constructed. First, let us consider the case that up to $t$ faults simultaneously occur on the $i$-th round. Unless the total data error is trivial, a combination of such faults will result in $\vec{s}_{i-1}\neq \vec{s}_i \neq \vec{s}_{i+1}$ (equivalent to a single $\mathrm{I}(i)$ fault), $\vec{s}_{i-1}= \vec{s}_i \neq \vec{s}_{i+1}$ (equivalent to a single $\mathrm{II}(i)$ fault), or $\vec{s}_{i-1}\neq \vec{s}_i = \vec{s}_{i+1}$ (equivalent to a single $\mathrm{II}(i-1)$ fault). A syndrome suitable for error correction, in this case, is similar to that of a Type I or a Type II fault. In other words, what matters is \emph{the presence of any faults in each round.} If we can deal with any case of a single fault, we can also deal with any case that multiple faults occur on the same round. As we aim to analyze the worst-case scenario, we can assume that no more than one fault occurs on each round when considering the case of multiple faults.


Next, we will see how difference vectors of two faults that occur on different rounds can be combined. Let us consider syndromes from any two consecutive rounds $j$ and $j+1$ which arise from two faults $\lambda_A$ and $\lambda_B$. $\lambda_A$ can cause either $\vec{s}_{A,j}=\vec{s}_{A,j+1}$ ($\delta_{A,j}=0$) or $\vec{s}_{A,j}\neq \vec{s}_{A,j+1}$ ($\delta_{A,j}=1$), and $\lambda_B$ can cause either $\vec{s}_{B,j}=\vec{s}_{B,j+1}$ ($\delta_{B,j}=0$) or $\vec{s}_{B,j}\neq \vec{s}_{B,j+1}$ ($\delta_{B,j}=0$). Combining $\lambda_A$ and $\lambda_B$ results in one of the following cases:
\begin{enumerate}
	\item If $s_{A,j}=s_{A,j+1}$ and $s_{B,j}=s_{B,j+1}$, then $s_{A,j}+s_{B,j}=s_{A,j+1}+s_{B,j+1}$; that is, $\delta_{A,j}=0$ and $\delta_{B,j}=0$ lead to $\delta_{AB,j}=0$.
	\item If $s_{A,j}=s_{A,j+1}$ and $s_{B,j}\neq s_{B,j+1}$, then $s_{A,j}+s_{B,j}\neq s_{A,j+1}+s_{B,j+1}$; that is, $\delta_{A,j}=0$ and $\delta_{B,j}=1$ lead to $\delta_{AB,j}=1$.
	\item If $s_{A,j}\neq s_{A,j+1}$ and $s_{B,j}=s_{B,j+1}$, then $s_{A,j}+s_{B,j}\neq s_{A,j+1}+s_{B,j+1}$; that is, $\delta_{A,j}=1$ and $\delta_{B,j}=0$ lead to $\delta_{AB,j}=1$.
	\item If $s_{A,j}\neq s_{A,j+1}$ and $s_{B,j}\neq s_{B,j+1}$, then either $s_{A,j}+s_{B,j}\neq s_{A,j+1}+s_{B,j+1}$ or $s_{A,j}+s_{B,j}=s_{A,j+1}+s_{B,j+1}$; that is, $\delta_{A,j}=1$ and $\delta_{B,j}=1$ lead to either $\delta_{AB,j}=1$ or $\delta_{AB,j}=0$.
\end{enumerate}
We will refer to the case that $\delta_{A,j}=1$ and $\delta_{B,j}=1$ lead to $\delta_{AB,j}=1$ as an $\mathsf{OR}$ case (since $1\;\mathsf{OR}\;1=1$), and refer to the case that $\delta_{A,j}=1$ and $\delta_{B,j}=1$ lead to $\delta_{AB,j}=0$ as an $\mathsf{XOR}$ case (since $1\;\mathsf{XOR}\;1=0$). For convenience, the first three cases where $\delta_{A,j}\;\mathsf{OR}\;\delta_{B,j} = \delta_{A,j}\;\mathsf{XOR}\;\delta_{B,j}$ will be simply referred to as $\mathsf{OR}$ cases.


If there are only $\mathsf{OR}$ cases when combining difference vectors of multiple faults, a syndrome suitable for error correction can be easily found; whenever we find $\delta_i=0$ on the resulting difference vector, we know that no fault occurs on the $i$-th round so $\vec{s}_i$ can be used. In practice, however, the $\mathsf{OR}$ and $\mathsf{XOR}$ cases cannot be easily distinguished. Thus, finding that $\delta_i=0$ does not guarantee that there is no fault on the $i$-th round. 

For example, suppose that $t=3$ and the resulting difference vector is $\vec{\delta}=010010$. $\vec{\delta}$ can be from one of the following combinations of faults:
\begin{enumerate}
	\item $\mathrm{I}(1)$, $\mathrm{I}(2)$, and $\mathrm{II}(5)$ with difference vectors 100000, 110000, and 000010 where $\mathsf{XOR}$ cases happen when combining the 1st bits. In this case, $\vec{s}_1$ and $\vec{s}_2$ cannot be used for error correction.
	\item $\mathrm{I}(3)$, $\mathrm{I}(4)$, and $\mathrm{I}(5)$ with difference vectors 011000, 001100, and 000110 where $\mathsf{XOR}$ cases happen when combining the 3rd, 4th, and 5th bits. In this case, $\vec{s}_3$, $\vec{s}_4$, and $\vec{s}_5$ cannot be used for error correction.
	\item $\mathrm{II}(2)$, $\mathrm{I}(6)$, and $\mathrm{I}(7)$ with difference vectors 010000, 000011, and 000001 where $\mathsf{XOR}$ cases happen when combining the 6th bits. In this case, $\vec{s}_6$ and $\vec{s}_7$ cannot be used for error correction.
\end{enumerate}
In the example above, none of $\vec{s}_1$--$\vec{s}_7$ works for all cases, so error correction cannot be done accurately when $\vec{\delta}=010010$ is found.



Fortunately, an FTEC protocol is normally developed to handle a \emph{limited number of faults}. We can use this fact to determine whether a zero bit in the resulting difference vector can arise from the $\mathsf{XOR}$ case. For example, suppose that the total number of faults is limited to $t=3$ and the resulting difference vector is $\vec{\delta}=0100010$:
\begin{enumerate}
	\item $\vec{\delta}$ can be from combining $\mathrm{I}(1)$, $\mathrm{I}(2)$, and $\mathrm{II}(6)$ with difference vectors 1000000, 1100000, and 0000010 where $\mathsf{XOR}$ cases happen when combining the 1st bits. In this case, $\vec{s}_1$ and $\vec{s}_2$ cannot be used for error correction.
	\item Also, $\vec{\delta}$ can be from combining  $\mathrm{II}(2)$, $\mathrm{I}(7)$, and $\mathrm{I}(8)$ with difference vectors 0100000, 0000011, and 0000001 where $\mathsf{XOR}$ cases happen when combining the 7th bits. In this case, $\vec{s}_6$ and $\vec{s}_7$ cannot be used for error correction.	
	\item However, $\vec{\delta}$ \emph{cannot} be from combining  $\mathrm{I}(3)$, $\mathrm{I}(4)$, $\mathrm{I}(5)$, and $\mathrm{I}(6)$ with difference vectors 0110000, 0011000, 0001100, and 0000110 where $\mathsf{XOR}$ cases happen when combining the 3rd to the 6th bits since this case requires 4 faults in total.
	\item Note that $\vec{\delta}$ can be from combining  $\mathrm{II}(2)$, $\mathrm{II}(5)$, and $\mathrm{I}(6)$ with difference vectors 0100000, 0000100, and 0000110 where $\mathsf{XOR}$ cases happen when combining the 5th bits. Although a Type I fault occurs on the 6th round, $\vec{s}_6$ can still be used for error correction since $\vec{s}_6=\vec{s}_5$ and there is no Type I fault on the 5th round. In this case, any $\vec{s}_i$ can be used for error correction.
\end{enumerate}
From the example above, we know that at least one of the bits $\delta_3$--$\delta_5$ must be a bit zero arising from the $\mathsf{OR}$ case ($0 \;\mathsf{OR}\;0\;\mathsf{OR}\;0$) if the total number of faults is no more than 3. Moreover, $\vec{s}_3=\vec{s}_4=\vec{s}_5=\vec{s}_6$. Thus, $\vec{s}_3$--$\vec{s}_6$ correspond to the data error at the end of some round and all of them can be used for error correction.

We will develop a general algorithm for finding a syndrome for error correction for any $t$ using the ideas explained previously. For convenience, we will introduce the notions of $\mathsf{OR}$ and $\mathsf{XOR}$ zeros, and usable and unusable zero substrings as follows:

\begin{definition}{$\mathsf{OR}$ and $\mathsf{XOR}$ zeros}
	
	Let $\vec{\delta}$ be a difference vector obtained from combining the difference vectors of some faults, and suppose that some bit $\delta_i$ of $\vec{\delta}$ is zero. $\delta_i$ is said to be an \emph{$\mathsf{OR}$ zero} if it arises from the $\mathsf{OR}$ case of fault combination ($0\;\mathsf{OR}\;0=0$), and $\delta_i$ is said to be an \emph{$\mathsf{XOR}$ zero} if it arises from the $\mathsf{XOR}$ case of fault combination ($1\;\mathsf{XOR}\;1=0$).
	\label{def:real_zero}
\end{definition}

\begin{definition}{Usable and unusable zero substrings}
	
	Let a difference vector $\vec{\delta}$ be of the form $\vec{\delta}=\eta_1 1 \eta_2 1 \dots 1 \eta_c$ for some positive integer $c$, where $\eta_j$ ($j=1,\dots,c$) is a zero substring of the form $00\dots0$ (the length of $\eta_j$ can be zero). For any $\eta_j$ with positive length, if it is certain that $\eta_j$ contains at least one $\mathsf{OR}$ zero, then $\eta_j$ is said to be \emph{usable}; otherwise, $\eta_j$ is \emph{unusable}.
	\label{def:usable}
\end{definition}

For any $\vec{\delta}$ of the form $\eta_1 1 \eta_2 1 \dots 1 \eta_c$, we aim to find whether $\eta_j$ is usable for all $\eta_j$ with positive length. If a usable $\eta_j$ exists, we can use a syndrome of \emph{any} round corresponding to $\eta_j$ to do error correction; because at least one zero bit in $\eta_j$ is an $\mathsf{OR}$ zero and all rounds corresponding to the same $\eta_j$ give the same syndrome, the syndrome accurately represents the data error at the end of some round. Here we will use the fact that the total number of faults are limited to find a usable syndrome.


For each $\eta_j$ of length $\geq 1$ ($2 \leq j \leq c-1$), we can define $\alpha_j$ and $\beta_j$ to be the minimum numbers of faults that lead to the substrings $\eta_1 1 \dots 1 \eta_{j-1}$ and $\eta_{j+1} 1 \dots 1 \eta_{c}$ (the substrings before and after $1\eta_j1$). That is, suppose that,
\begin{equation}
	\vec{\delta}=\eta_1 1 \dots 1 \eta_{j-1} \overset{*}{\big|} 1 \eta_j 1 \overset{**}{\big|} \eta_{j+1} 1 \dots 1 \eta_c. \nonumber
\end{equation}
Then, $\alpha_j$ is equal to the total number of non-overlapping $11$ substrings plus the total number of remaining one bits before $*$, and $\beta_j$ is equal to the total number of non-overlapping $11$ substrings plus the total number of remaining one bits after $**$. For example, for a substring $\eta_j=000$ in $\vec{\delta}=1011000111101$, $\alpha_j=2$ and $\beta_j=3$.

For $\eta_1$ and $\eta_c$, we will define $\alpha_1=0$ and $\beta_c=0$, and define $\beta_1$ and $\alpha_c$ similarly to those of other $\eta_j$'s. If all bits of $\vec{\delta}$ are zeros, we can write $\vec{\delta}=\eta_1$ and define $\alpha_1=\beta_1=0$.

The following theorem states the sufficient and necessary condition for a zero substring to be usable.
\begin{theorem}
	Let $t$ be the maximum number of faults, $\vec{\delta}$ be a difference vector of the form $\eta_1 1 \eta_2 1 \dots 1 \eta_c$, and $\alpha_j$ and $\beta_j$ be the minimum numbers of faults leading to the substrings before and after $1 \eta_j 1$ (or $\eta_1 1$ or $1 \eta_c$). Suppose that the length of $\eta_j$ is $\gamma_j > 0$. Then, $\eta_j$ is usable if and only if $\alpha_j+\beta_j + \gamma_j \geq t$.
	\label{thm:condition_for_usable}
\end{theorem}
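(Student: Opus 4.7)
The plan is to prove the equivalent quantitative claim that the minimum number of faults in any configuration which produces $\vec{\delta}$ and hits every bit of $\eta_j$ at least once equals $\alpha_j + \beta_j + \gamma_j + 1$. The theorem then follows at once: $\eta_j$ is unusable iff such a configuration exists with at most $t$ faults, iff $\alpha_j + \beta_j + \gamma_j + 1 \leq t$, iff $\alpha_j + \beta_j + \gamma_j < t$; the contrapositive is the advertised usability criterion.

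For the upper bound, let $\eta_j$ occupy bits $p, p+1, \ldots, p + \gamma_j - 1$, bracketed by the $1$s at bits $p-1$ and $p + \gamma_j$. I would exhibit an explicit configuration of $\alpha_j + \beta_j + \gamma_j + 1$ faults consisting of $\alpha_j$ optimal faults realizing the prefix pattern $\eta_1 1 \cdots 1 \eta_{j-1}$ on bits $1, \ldots, p-2$, $\beta_j$ optimal faults realizing the suffix on bits $p + \gamma_j + 1, \ldots, m-1$, and the $\gamma_j + 1$ Type I faults $\mathrm{I}(p), \mathrm{I}(p+1), \ldots, \mathrm{I}(p+\gamma_j)$. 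The Type I chain hits each $\eta_j$ bit exactly twice (so every such bit is touched and remains $0$ in $\vec{\delta}$) while contributing the bracketing $1$s at bits $p-1$ and $p + \gamma_j$ exactly once each. Boundary cases $j=1$ or $j=c$ use $\mathrm{I}(1)$ or $\mathrm{I}(m)$ analogously, noting that these hit only a single bit.

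For the lower bound, I would consider any configuration $F$ producing $\vec{\delta}$ with every bit of $\eta_j$ hit, and partition it into prefix-only faults $F_L$ (hits in bits $\leq p-2$), suffix-only faults $F_R$ (hits in bits $\geq p + \gamma_j + 1$), middle-interior faults $F_M^{\text{int}}$ (hits in bits $p-1, \ldots, p + \gamma_j$), and up to two boundary-crossing Type I faults $\mathrm{I}(p-1)$ and $\mathrm{I}(p + \gamma_j + 1)$ when $j \neq 1,c$. In each of the four (or two, in boundary cases) sub-cases determined by which boundary-crossing faults are present, a hit-counting argument in the middle region---using the parity requirements (bit $p-1$ odd, each $\eta_j$ bit even and $\geq 2$, bit $p + \gamma_j$ odd) together with the fact that every fault contributes at most two middle hits---lower-bounds $|F_M^{\text{int}}|$ plus the boundary-crossing count, while $|F_L|$ and $|F_R|$ are at least the minimum fault counts $\alpha_j'$ and $\beta_j'$ for prefix and suffix patterns that may differ from the originals by a single bit flip at bit $p-2$ or $p+\gamma_j+1$ depending on which boundary-crossing faults are used. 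Since flipping a single bit of a binary string changes its minimum-fault realization count by at most $1$, $\alpha_j' \geq \alpha_j - 1$ and $\beta_j' \geq \beta_j - 1$, and summing the bounds yields $|F| \geq \alpha_j + \beta_j + \gamma_j + 1$ in every case.

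The main obstacle will be the lower-bound case analysis: verifying that any savings in the prefix or suffix realization obtained by using a boundary-crossing Type I fault to share contributions with the middle region are exactly canceled by the additional middle-interior faults needed to restore the correct parity on the bracketing $1$ bits. The combinatorial book-keeping relating $\alpha_j, \alpha_j', \beta_j, \beta_j'$ with the middle-hit budget across the four sub-cases is where the argument requires the most care.
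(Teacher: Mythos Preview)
Your approach is the same as the paper's in spirit---both argue that the minimum number of faults realizing $\vec{\delta}$ with every bit of $\eta_j$ an $\mathsf{XOR}$ zero is exactly $\alpha_j+\beta_j+\gamma_j+1$---but you supply a lower-bound argument that the paper omits entirely. The paper simply asserts that the prefix, the block $1\eta_j1$, and the suffix cost $\alpha_j$, $\gamma_j+1$, and $\beta_j$ respectively and that these add, without ever addressing whether a boundary-crossing Type~I fault could serve two regions at once. Your partition into $F_L,F_R,F_M^{\text{int}}$ plus the two possible boundary faults $\mathrm{I}(p-1)$ and $\mathrm{I}(p+\gamma_j+1)$ is exactly the right way to close that gap.

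Two cautions about your sketch. First, the ``parity requirements'' you invoke (odd hit count on the bracketing $1$-bits, even $\geq 2$ on the $\eta_j$ bits) do not hold in the paper's model: because a $1$-bit can arise from an $\mathsf{OR}$ combination of two hits and a $0$-bit from an $\mathsf{XOR}$ of three, the only valid constraints are ``$\geq 1$ hit'' on a $1$-bit and ``$\geq 2$ hits'' on an $\mathsf{XOR}$ zero. Second, with only these weaker inequalities, the sub-case where \emph{both} boundary-crossing faults are present and both adjacent bits $p-2,\,p+\gamma_j+1$ equal $1$ comes out one short: hit counting gives $|F_M^{\text{int}}|\geq\gamma_j$ and $\alpha_j',\beta_j'\geq\alpha_j-1,\beta_j-1$, summing to only $\alpha_j+\beta_j+\gamma_j$. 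The missing $+1$ is real and recoverable, but it comes from the one-fault-per-round constraint (the $\gamma_j$ interior rounds cannot supply $2\gamma_j$ hits to the $\eta_j$ bits alone using distinct faults), not from raw hit counting. You should make that structural step explicit when you write out the double-boundary case.
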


\begin{proof}	
	Consider $\eta_j$ where $j=2\dots,c-1$. The minimum number of faults that can cause the substring $1\eta_j 1$ with all $\mathsf{XOR}$ zeros is $\gamma_j+1$ (i.e., $1\eta_j 1 = 100\dots001 = 110\dots000 + 011\dots000 + \dots + 000\dots011$). In case that all zeros in $1\eta_j 1$ are $\mathsf{XOR}$ zeros, the minimum number of total faults that cause $\vec{\delta}$ must satisfy $\alpha_j+\beta_j+(\gamma_j+1) \leq t$ (or equivalently, $\alpha_j+\beta_j+\gamma_j < t$).
	
	If $\alpha_j+\beta_j + \gamma_j \geq t$, the zeros in $1\eta_j 1$ cannot be all $\mathsf{XOR}$ zeros (i.e., $1\eta_j 1$ must arise from less than $\gamma_j+1$ faults). In other words, there is at least one $\mathsf{OR}$ zero in $\eta_j$, so $\eta_j$ is usable. In contrast, if $\alpha_j+\beta_j+\gamma_j < t$, $\eta_j$ may arise from $\gamma_j+1$ faults. Because it is not certain whether there is an $\mathsf{OR}$ zero in $\eta_j$ or not, $\eta_j$ is unusable. 
	
	Similar analysis is applicable to $\eta_1$ and $\eta_c$, where the substrings $\eta_1 1$ and $1 \eta_c$ are considered instead of $1\eta_j 1$, and $\alpha_1=0$ and $\beta_c=0$ are defined. It is also applicable when all bits of $\vec{\delta}$ are zeros, in which $\vec{\delta}=\eta_1$ and $\alpha_1=\beta_1=0$.
\end{proof}

\cref{thm:condition_for_usable} can be interpreted as follows: Let $t$ be the total number of faults. For each $\eta_j$, $\alpha_j+\beta_j$ is the minimum number of occurred faults, thus $t-\alpha_j-\beta_j$ is the maximum number of remaining faults, while $\gamma_j+1$ is the number of rounds with repeated syndromes. If the syndromes are repeated more than the maximum number of remaining faults, it is certain that at least one round in the $\gamma_j+1$ rounds must have a correct syndrome which can be used for error correction.

By \cref{thm:condition_for_usable},	an algorithm for finding a usable zero substring from a given fault number $t_\mathrm{in}$ and a difference vector $\vec{\delta}_\mathrm{in}$ can be constructed as follows:
\begin{algorithm}	
	Let $t_\mathrm{in}$ be any number of faults and $\vec{\delta}_\mathrm{in}=\eta_1 1 \eta_2 1 \dots 1 \eta_c$ be a difference vector for some positive integer $c$. For each $\eta_j$ with length $\gamma_j>0$, calculate $\alpha_j$ and $\beta_j$ (as defined in \cref{thm:condition_for_usable}). If $\alpha_j+\beta_j+\gamma_j \geq t_\mathrm{in}$, return $\eta_j$ as a usable zero substring. 
	\label{alg:find_usable}
\end{algorithm}

Using \cref{alg:find_usable}, an FTEC protocol correcting up to $t$ faults that satisfies the strong FTEC conditions can be developed:
\begin{protocol}{FTEC protocol satisfying the strong FTEC conditions for a stabilizer code of any distance}
	
	Let $t=\lfloor (d-1)/2\rfloor$ be the weight of error that a stabilizer code of distance $d$ can correct. In each round of full syndrome measurements, measure stabilizer generators using the Shor syndrome extraction circuits. After the $i$-th round ($i\geq 2$), calculate $\delta_{i-1}$. Repeat syndrome measurements until one of the following conditions is satisfied, then perform error correction using the error syndrome corresponding to each condition:
	\begin{enumerate}
		\item If at least one usable $\eta_j$ is found by \cref{alg:find_usable} where $t_\mathrm{in}=t$ and $\vec{\delta}_\mathrm{in}=\vec{\delta}$ (the current difference vector), stop the syndrome measurements. Perform error correction using the syndrome corresponding to any zero in $\eta_j$.
		\item If the total number of non-overlapping $11$ substrings in $\vec{\delta}$ is $t$, stop the syndrome measurements. Perform error correction using the syndrome obtained from the latest round.
	\end{enumerate}
	\label{pro:strong_any_t} 
\end{protocol}
(The second condition to stop the syndrome measurements is introduced to count the number of occurred faults in case that all bits in $\vec{\delta}$ are ones and there is no zero substring of positive length.)

It is possible to find the number of rounds of syndrome measurements in the worst-case scenario of \cref{pro:strong_any_t} for any $t$. This number is the same as the minimum number of rounds required to guarantee that a usable syndrome exists in any case. The number can be found by the following theorem:
\begin{theorem}
	Let $t = \lfloor (d-1)/2\rfloor$, where $d \geq 3$ is the distance of a stabilizer code being used in \cref{pro:strong_any_t}. Performing the following number of rounds of full syndrome measurements is sufficient to guarantee that \cref{pro:strong_any_t} is strongly $t$-fault tolerant;
	\begin{enumerate}
		\item if $t$ is odd, performing $\left(\frac{t+3}{2}\right)^2-1$ rounds of full syndrome measurements is sufficient;
		\item if $t$ is even, performing $\left(\frac{t+2}{2}\right)\left(\frac{t+4}{2}\right)-1$ rounds of full syndrome measurements is sufficient.
	\end{enumerate}
	\label{thm:bound_strong} 
\end{theorem}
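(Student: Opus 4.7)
The strategy is to bound the maximal length $L$ of any difference vector $\vec{\delta}$ (arising from at most $t$ faults) that fails to trigger either stopping condition in \cref{pro:strong_any_t}; since one additional round must then force a stop, the worst-case round count will be $m=L+2$.

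I would first fix structural parameters: let $\vec{\delta}$ have $q$ maximal one-runs of lengths $k_1,\ldots,k_q$ and $z$ maximal zero-runs of positive lengths, so $|\vec{\delta}|=\sum_i k_i+\sum_j\gamma_j$. Following the paper's counting, write $N(\vec{\delta})=\sum_i\lceil k_i/2\rceil$ for the minimum fault count producing $\vec{\delta}$ (non-overlapping $11$s plus remaining $1$s). The failure of condition~2 amounts to $\sum_i\lfloor k_i/2\rfloor\leq t-1$. For a zero-run $\eta_j$ with adjacent one-run lengths $k_{L,j},k_{R,j}$ (taking $k=0$ at a boundary), a direct computation yields
\[
  \alpha_j+\beta_j \;=\; N(\vec{\delta}) - \epsilon(k_{L,j}) - \epsilon(k_{R,j}),
\]
where $\epsilon(k)=k\bmod 2$, because truncating the separator $1$ immediately adjacent to $\eta_j$ shrinks $\lceil k/2\rceil$ for that one-run by exactly $\epsilon(k)$. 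The failure of condition~1 then reads $\gamma_j\leq(t-1)-N(\vec{\delta})+\epsilon(k_{L,j})+\epsilon(k_{R,j})$.

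Summing these per-zero-run inequalities, each $\epsilon(k_i)$ enters $\sum_j[\epsilon(k_{L,j})+\epsilon(k_{R,j})]$ at most twice, with equality exactly when $\vec{\delta}$ starts and ends with $0$. Setting $a=\sum\lfloor k_i/2\rfloor$ and $r=\sum\epsilon(k_i)$ (so $\sum k_i=2a+r$ and $N=a+r$), and using $z\leq q+1$, this yields the length bound
\[
  |\vec{\delta}|\leq 2a+3r+(q+1)(t-1-a-r),
\]
with the starts-and-ends-with-$0$ configuration dominating the others, since losing an endpoint $0$-run costs a factor of $(t-1-N)\geq 0$ together with an $\epsilon$ contribution. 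I would then optimize: for $q\geq 2$ the partial derivatives in $a$ and $r$ are nonpositive, pushing the maximum to $a=0$, $r=q$, meaning every separator is a single $1$; the bound reduces to the downward parabola $3q+(q+1)(t-1-q)$, peaked at $q=(t+1)/2$.

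For odd $t$ this gives $L=(t^2+6t-3)/4$, realized by $\vec{\delta}=0^{(t-1)/2}(1\,0^{(t+1)/2})^{(t-1)/2}\,1\,0^{(t-1)/2}$, which one can check arises from a valid $\leq t$-fault scenario and triggers neither condition. For even $t$, both $q=t/2$ and $q=(t+2)/2$ give $L=(t^2+6t-4)/4$ with analogous extremal vectors. Substituting $m=L+2$ yields the claimed $(t+3)^2/4-1$ for odd $t$ and $(t+2)(t+4)/4-1$ for even $t$. The main obstacle is establishing the identity $\alpha_j+\beta_j=N(\vec{\delta})-\epsilon(k_{L,j})-\epsilon(k_{R,j})$ uniformly across interior and boundary zero-runs and ruling out the alternative starting/ending configurations; once those are handled, the remainder is an elementary quadratic optimization and a direct check of the extremal construction.
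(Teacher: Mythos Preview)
Your approach is correct and reaches the same bound, but via a noticeably different route than the paper. The paper parametrizes by $p$ (the number of positive-length zero-runs) and the total number of ones, then splits into three boundary configurations---whether the leftmost/rightmost zero-runs $\eta_1,\eta_c$ are present---each with a parity sub-case; for every case it writes down the individual constraints on the $\gamma_j$'s, sums them, and optimizes the resulting $f(p,q)$ case by case. Your identity $\alpha_j+\beta_j=N(\vec\delta)-\epsilon(k_{L,j})-\epsilon(k_{R,j})$ collapses all of these cases into one inequality: the boundary distinctions become simply the question of whether $z=q-1$, $q$, or $q+1$ and whether each $\epsilon(k_i)$ is counted once or twice in the summed bound. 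The paper's six sub-cases are thereby replaced by a single summing argument and a quadratic in $q$, and the extremal vector you write down coincides with the paper's explicit form.

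Two small points are worth tightening. First, ``partial derivatives in $a$ and $r$ nonpositive'' alone does not force $(a,r)=(0,q)$; you need the feasibility constraint $a+r\geq q$ (since each $k_i\geq 1$ contributes at least $1$ to $\lfloor k_i/2\rfloor+\epsilon(k_i)$), on which the linear form $(1-q)a+(2-q)r$ restricted to $a+r=q$ is maximized at $(0,q)$ for $q\geq 2$. Second, replacing $z$ by $q+1$ in $z(t-1-N)$ is only an upper bound when $t-1-N\geq 0$; the case $N=t$ needs a one-line separate check (it forces each $\gamma_j=1$ with both flanking runs odd and gives $|\vec\delta|\leq 2t-1$, which is dominated by your quadratic for all $t\geq 1$). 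Neither issue affects the final answer.
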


A proof of \cref{thm:bound_strong} is provided in \cref{sec:thm2_proof}.

Note that in some cases, we do not have to complete the full syndrome measurements in the very last round to find a usable syndrome. For example, if a generator measurement reveals that the first bit of $\vec{s}_{i}$ is different from the first bit of $\vec{s}_{i-1}$, we can immediately tell that $\delta_{i-1}$ is 1. To reduce the total number of generator measurements using this idea, we can modify \cref{pro:strong_any_t} by checking the difference between the syndromes of the two latest rounds ($i$ and $i-1$) more frequently, and running \cref{alg:find_usable} as soon as $\delta_{i-1}=1$ is found (or at the end of the $i$-th round if $\delta_{i-1}=0$).

Moreover, \cref{pro:strong_any_t} applied to a CSS code can be further optimized using the fact that $Z$-type and $X$-type errors can be corrected separately. Let $t=\lfloor(d-1)/2\rfloor$ be the number of faults that the protocol can correct (where $d$ is the code distance). We can first measure $X$-type generators repeatedly and a difference vector $\vec{\delta}_x$ will be obtained. A syndrome $\vec{s}_x$ suitable for $Z$-type error correction can be found using \cref{alg:find_usable} with $t_\mathrm{in}=t$ and $\vec{\delta}_\mathrm{in}=\vec{\delta}_x$. Once $\vec{s}_x$ is found, the minimum number of faults that occur during the $X$-type generator measurement, denoted by $t_\mathrm{oc}$, can be calculated by counting the total number of non-overlapping $11$ substrings plus the total number of remaining one bits in $\vec{\delta}_x$. After that, we can measure $Z$-type generators repeatedly, and a difference vector $\vec{\delta}_z$ can be obtained. A syndrome $\vec{s}_z$ suitable for $X$-type error correction can be found using \cref{alg:find_usable} with $t_\mathrm{in}=t-t_\mathrm{oc}$ and $\vec{\delta}_\mathrm{in}=\vec{\delta}_z$. Because the maximum number of remaining faults $t-t_\mathrm{oc}$ is used instead of $t$ in the latter part of the protocol, $\vec{s}_z$ could be found faster than $\vec{s}_x$, and the total number of generator measurements in the protocol could be reduced. 



{Recall that in the traditional Shor FTEC scheme where the Shor decoder is used, the repeated syndrome measurements are done until the syndromes are repeated $t+1$ times in a row; i.e., a substring $\eta_j$ in $\vec{\delta}$ with length $\gamma_j=t$ is found. With the notations introduced in this work, the Shor decoder can be considered as a special case where $\alpha_j$ and $\beta_j$ are defined to be 0 for any $\eta_j$. That is, for the Shor decoder, $\eta_j$ is usable iff $\gamma_j\geq t$ (by \cref{thm:condition_for_usable}).

One interesting aspect of our scheme is that the information from the past, the current, and the future rounds (which is contained in $\alpha_j$, $\gamma_j$, and $\beta_j$) is used to determine whether the syndrome of the current rounds (that leads to $\eta_j$) is suitable for error correction. Here, the word ``future'' refers to the fact that the syndromes obtained in the very first rounds can be found usable at a later stage of the protocol as more syndromes are collected. This is in contrast to the Shor decoder in which only information of the current rounds (the number of rounds with repeated syndromes) is used.

Note that although our adaptive scheme requires less time overhead compared to the traditional Shor scheme, it requires more classical processing since \cref{alg:find_usable} must be run after each round of syndrome measurements. Nevertheless, the classical time complexity of \cref{alg:find_usable} is $O(t^3)$, so the classical processing part is not likely to limit the performance of the adaptive scheme. The analysis of the classical time complexity of \cref{alg:find_usable} is provided in \cref{sec:complexity}.

\section{Adaptive measurements for Shor error correction satisfying the weak FTEC conditions}
\label{sec:weak_EC}

As previously mentioned in \cref{sec:Shor}, a code of high distance can be obtained in some code families without using code concatenation. In that case, the weak FTEC conditions in \cref{def:weak_FT} are sufficient to guarantee that fault tolerance can be achieved; it is not necessary to guarantee the weight of the output error when the weight of the input error is too high. In this section, we will develop an FTEC protocol similar to the protocol in \cref{sec:strong_EC}, but the weak FTEC conditions are considered instead of the strong FTEC conditions (the conditions in \cref{def:strong_FT}). The main goal of this section is to further reduce the number of rounds required to find a syndrome suitable for error correction for some families of codes in which the strong FTEC conditions need not be satisfied.

Similar to the EC scheme in \cref{sec:strong_EC} (and the traditional Shor scheme), stabilizer generators will be measured using the Shor syndrome extraction circuits. However, we will use a different idea to find a syndrome suitable for error correction. Let $r$ be the weight of the input error, $s$ be the number of faults in the protocol, and $t=\lfloor (d-1)/2\rfloor$ be the weight of error that a stabilizer code of distance $d$ can correct. To make sure that both ECCP and ECRP in \cref{def:weak_FT} are satisfied whenever $r+s \leq t$, we will use the following ideas when developing an FTEC protocol:  
\begin{enumerate}
	\item For any case with $r \geq 1$ and $r+s \leq t$, at least one syndrome suitable for error correction (a syndrome $\vec{s}_i$ obtained from some round $i$ which corresponds to the data error at the end of that round) must be found. In this case, error correction using $\vec{s}_i$ will remove the input error and the error from faults that occurred up to round $i$. Thus, the output state will be logically the same as the input state, and the weight of the output error (the error from faults after round $i$) will be $\leq s$.
	\item For any case with $r=0$ and $s\leq t$, if the difference vector of that case is the same as the difference vector of some case with $r \geq 1$ and $r+s \leq t$, at least one syndrome suitable for error correction must be found (the same syndrome must work for both cases since they cannot be distinguished by observing the difference vector). In this case, the error correction will remove the error from some faults that occurred early in the protocol, so the output state will be logically the same as the input state, and the weight of the output error will be $\leq s$.
	\item For any case with $r=0$ and $s\leq t$, if the difference vector of that case is different from the difference vectors of all cases with $r \geq 1$, a protocol can stop without doing any error correction. Because there is no input error, the output state will be logically the same as the input state, and the weight of the output error will be $\leq s$.
\end{enumerate} 

The syndrome of the first round $\vec{s}_1$ is special since it is related to the weight of the input error. If $\vec{s}_1\neq 0$, then either one of the following is true:
\begin{enumerate}[label=\Alph*)]
	\item the input error has weight $\geq 1$, and there are $s \leq t-1$ faults in the protocol (a Type I fault may or may not be present in the first round), or
	\item there is no input error, there are $s \leq t$ faults in the protocol, and the first round has a Type I fault ($I(1)$). 
\end{enumerate}
On the other hand, if $\vec{s}_1=0$, then either one of the following is true:
\begin{enumerate}[label=\Alph*)]\setcounter{enumi}{2}
	\item the input error has weight $\geq 1$, there are $s \leq t-1$ faults in the protocol, and the first round has a Type I fault ($I(1)$), or
	\item there is no input error, there are $s \leq t$ faults in the protocol, and the first round has no Type I faults (no $I(1)$).
\end{enumerate}
(Here we assume that there is at most one fault in each round, which is either Type I or Type II. Please see \cref{subsec:diff_multiple} for the validity of this assumption.)

To see how a syndrome suitable for error correction can be found from a difference vector $\vec{\delta}$ in each case of $\vec{s}_1$, let us consider an FTEC protocol correcting up to $t=1$ fault as an example. First, suppose that $r+s \leq 1$ and $\vec{s}_1 \neq 0$. There are two possibilities:
\begin{enumerate}
	\item $r=1$ and $s=0$ (Case A): In this case, $\delta_1 = 0$ ($\vec{s}_1=\vec{s}_2$). The error correction can be done using $\vec{s}_1$. 
	\item $r=0$ and $s=1$ (Case B): In this case, there must be $I(1)$ and no faults on the other rounds, so $\delta_1 = 1$ ($\vec{s}_1\neq\vec{s}_2$). The error correction is not necessary in this case.
\end{enumerate}
We can see that when $\vec{s}_1 \neq 0$, the protocol will find a syndrome suitable for error correction or stop without doing error correction within 2 rounds. Next, suppose that $\vec{s}_1=0$. There is only one possible case: there is no input error and the first round has no Type I fault (Case D). Thus, we can stop without doing error correction whenever $\vec{s}_1 = 0$. Only 1 round of full syndrome measurements is needed in this case. Using these ideas, an FTEC protocol for $t=1$ can be constructed as follows:
\begin{protocol}{FTEC protocol satisfying the weak FTEC conditions for a stabilizer code of distance 3}
	
	In each round of full syndrome measurements, measure stabilizer generators using the Shor syndrome extraction circuits.
	After the syndrome $\vec{s}_1$ from the first round is obtained, do the following:
	\begin{enumerate}
		\item If $\vec{s}_1 \neq 0$, repeat the syndrome measurements to obtain $\vec{s}_2$.
		\begin{enumerate}
			\item If $\vec{s}_1=\vec{s}_2$, perform error correction using $\vec{s}_1$.
			\item If $\vec{s}_1\neq\vec{s}_2$, stop and do nothing.
		\end{enumerate}
		\item If $\vec{s}_1 = 0$, stop and do nothing.
	\end{enumerate} 
	\label{pro:weak_t_1} 
\end{protocol}
In fact, \cref{pro:weak_t_1} is similar to the protocol for a stabilizer code of distance 3 proposed by Delfosse and Reichart in \cite{DR20}. 

In case that $t\geq 2$, finding a usable zero substring (which gives a syndrome suitable for error correction) from a difference vector obtained from repeated syndrome measurements can be more complicated. Suppose that a difference vector is $\vec{\delta}=\delta_1 \delta_2 \dots \delta_m$ for some positive integer $m$. We will use the following procedures to find a usable zero substring $\eta_j$:
\begin{enumerate}
	\item If $\vec{s}_1 \neq 0$, a usable zero substring will be found by \cref{alg:find_usable} with $t_\mathrm{in}=t-1$ and $\vec{\delta}_\mathrm{in}=\delta_2 \dots \delta_m$ (the first bit of $\vec{\delta}$ is removed).
	\item If $\vec{s}_1 = 0$, a usable zero substring will be found by \cref{alg:find_usable} with $t_\mathrm{in}=t$ and $\vec{\delta}_\mathrm{in}=0 \delta_1 \delta_2 \dots \delta_m$ (bit zero is added to the beginning of $\vec{\delta}$).
\end{enumerate}	

To see how these work, assume that $r+s \leq t$ and first consider the case that $\vec{s}_1\neq 0$. Case A has $r \geq 1$ and $s \leq t-1$, while Case B has $r=0$, $s \leq t$, and $I(1)$. In both cases, the second rounds onward have $\leq t-1$ faults. When a usable zero substring $\eta_j$ is found by \cref{alg:find_usable} with $t_\mathrm{in}=t-1$ and $\vec{\delta}_\mathrm{in}=\delta_2 \dots \delta_m$, it is guarantee that there is at least one $\mathsf{OR}$ zero in $\eta_j$. Thus, the syndrome corresponding to $\eta_j$ can be used for error correction in both cases.

Next, consider the case that $\vec{s}_1 = 0$. Case C has $r \geq 1$, $s \leq t-1$, and $I(1)$, while Case D has $r=0$, $s \leq t$, and no $I(1)$. Suppose that $\vec{\delta} = \eta_1 1 \eta_2 1 \dots 1 \eta_c$, and we run \cref{alg:find_usable} with $t_\mathrm{in}=t$ and $\vec{\delta}_\mathrm{in}=0 \eta_1 1 \eta_2 1 \dots 1 \eta_c$. 
\begin{enumerate}
	\item If the algorithm finds that $\eta_j$ with $j\geq 2$ is usable, the syndrome corresponding to $\eta_j$ can be used for error correction in both cases since there is an $\mathsf{OR}$ zero in $\eta_j$.
	\item Suppose that the algorithm finds that $0\eta_1$ is usable. By \cref{thm:condition_for_usable}, this can happen only when $\beta_1+\gamma_1+1\geq t$ (where $\gamma_1$ is the number of zeros in $\eta_1$). In Case C where $I(1)$ is present, the substring $\eta_1 1$ must arise from $\gamma_1+1$ faults. Because the total number of faults in Case C must satisfy $s\leq t-1$, i.e., $\beta_1+\gamma_1+1 \leq t-1$ must hold, the algorithm will never find that $0\eta_1$ is usable when Case C happens. In other words, the algorithm will find that $0\eta_1$ is usable only when Case D happens. It is okay not to perform any error correction in this case since Case D has no input error (the syndrome corresponding to $0 \eta_1$ is $\vec{s}_1 = 0$ which leads to no error correction).
\end{enumerate}

Here we can construct an FTEC protocol correcting up to $t$ faults that satisfies the weak FTEC conditions as follows:
\begin{protocol}{FTEC protocol satisfying the weak FTEC conditions for a stabilizer code of distance $d \geq 5$}
	
	Let $t=\lfloor (d-1)/2\rfloor$ be the weight of error that a stabilizer code of distance $d$ can correct ($d \geq 5$). In each round of full syndrome measurements, measure stabilizer generators using the Shor syndrome extraction circuits. After the $i$-th round ($i\geq 2$), calculate $\delta_{i-1}$. Repeat syndrome measurements until one of the following conditions is satisfied, then perform error correction using the error syndrome corresponding to each condition:
	\begin{enumerate}
		\item If $\vec{s}_1 \neq 0$, obtain $\vec{\delta}'$ by removing the first bit of $\vec{\delta}$.
		\begin{enumerate}
			\item If at least one usable $\eta_j$ is found by \cref{alg:find_usable} where $t_\mathrm{in}=t-1$ and $\vec{\delta}_\mathrm{in}=\vec{\delta}'$, stop the syndrome measurements. Perform error correction using the syndrome corresponding to any zero in $\eta_j$.
			\item If the total number of non-overlapping $11$ substrings in $\vec{\delta}'$ is $t-1$, stop the syndrome measurements. Perform error correction using the syndrome obtained from the latest round.
		\end{enumerate}
		\item If $\vec{s}_1 = 0$, obtain $\vec{\delta}'$ by adding bit zero to the beginning of $\vec{\delta}$.
		\begin{enumerate}
			\item If at least one usable $\eta_j$ is found by \cref{alg:find_usable} where $t_\mathrm{in}=t$ and $\vec{\delta}_\mathrm{in}=\vec{\delta}'$, stop the syndrome measurements. Perform error correction using the syndrome corresponding to any zero in $\eta_j$. 
			\item If the total number of non-overlapping $11$ substrings in $\vec{\delta}'$ is $t$, stop the syndrome measurements. Perform error correction using the syndrome obtained from the latest round.
		\end{enumerate}	
	\end{enumerate}
	\label{pro:weak_any_t} 
\end{protocol}

The number of rounds of syndrome measurements in the worst-case scenario of \cref{pro:weak_any_t} for each of $\vec{s}_1 \neq 0$ and $\vec{s}_1 = 0$ cases, which is also the minimum number of rounds required to guarantee that a usable syndrome exists in each case, can be found by the following theorem:
\begin{theorem}
	Let $t = \lfloor (d-1)/2\rfloor$, where $d \geq 5$ is the distance of a stabilizer code being used in \cref{pro:weak_any_t}. Performing the following number of rounds of full syndrome measurements is sufficient to guarantee that \cref{pro:weak_any_t} is weakly $t$-fault tolerant;
	\begin{enumerate}
		\item if $\vec{s}_1 \neq 0$,
		\begin{enumerate}
			\item if $t$ is even, performing $\left(\frac{t+2}{2}\right)^2$ rounds of full syndrome measurements is sufficient;
			\item if $t$ is odd, performing $\left(\frac{t+1}{2}\right)\left(\frac{t+3}{2}\right)$ rounds of full syndrome measurements is sufficient;
		\end{enumerate}
		\item if $\vec{s}_1 = 0$,
		\begin{enumerate}
			\item if $t$ is even, performing $\left(\frac{t+2}{2}\right)\left(\frac{t+4}{2}\right)-2$ rounds of full syndrome measurements is sufficient;
			\item if $t$ is odd, performing $\left(\frac{t+3}{2}\right)^2-2$ rounds of full syndrome measurements is sufficient.
		\end{enumerate}
	\end{enumerate}	
	\label{thm:bound_weak} 
\end{theorem}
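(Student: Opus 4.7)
The plan is to deduce both cases of \cref{thm:bound_weak} directly from \cref{thm:bound_strong} by unwinding how $\vec{\delta}'$ is constructed in the two branches of \cref{pro:weak_any_t}. Let $R(t')$ denote the sufficient round count given by \cref{thm:bound_strong} for the strong protocol with fault parameter $t'$. The key observation is that the proof of \cref{thm:bound_strong} in \cref{sec:thm2_proof} is purely combinatorial in $\vec{\delta}$: every binary string of length $R(t')-1$ forces \cref{alg:find_usable} with $t_\mathrm{in}=t'$ to terminate, either by returning a usable zero substring or by containing $t'$ non-overlapping $11$ substrings. Since this bound depends only on length, it applies to any $\vec{\delta}'$ that \cref{pro:weak_any_t} feeds to \cref{alg:find_usable}, irrespective of the extra structure imposed by the branch.

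For the branch $\vec{s}_1 \neq 0$, \cref{pro:weak_any_t} strips the first bit of $\vec{\delta}$ and sets $t_\mathrm{in}=t-1$, so after $m$ rounds $|\vec{\delta}'|=m-2$. Termination is therefore guaranteed once $m-2 \geq R(t-1)-1$, i.e.\ $m \geq R(t-1)+1$. Substituting from \cref{thm:bound_strong}: for $t$ even (so $t-1$ odd), $R(t-1)+1 = \left(\frac{t+2}{2}\right)^2$; for $t$ odd (so $t-1$ even), $R(t-1)+1 = \left(\frac{t+1}{2}\right)\left(\frac{t+3}{2}\right)$, matching the stated bounds.

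For the branch $\vec{s}_1 = 0$, the protocol prepends a zero bit to $\vec{\delta}$ and sets $t_\mathrm{in}=t$, so $|\vec{\delta}'|=m$ after $m$ rounds. Termination is guaranteed once $m \geq R(t)-1$. Substituting: for $t$ even, $R(t)-1 = \left(\frac{t+2}{2}\right)\left(\frac{t+4}{2}\right)-2$; for $t$ odd, $R(t)-1 = \left(\frac{t+3}{2}\right)^2-2$, again matching the statement. Fault tolerance at these round counts then follows from the Case A--D analysis preceding the theorem in \cref{sec:weak_EC}, which shows that the syndrome returned by \cref{alg:find_usable} corresponds to the data error at the end of some round whenever $r+s\leq t$, so both conditions of \cref{def:weak_FT} are met.

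The main obstacle I anticipate is making sure that the length-only bound extracted from \cref{thm:bound_strong} applies verbatim to the constrained vector $\vec{\delta}' = 0\vec{\delta}$ in the $\vec{s}_1=0$ branch. This concern is moot: only the upper-bound direction is needed---that every binary string of the stated length forces termination---and the extra constraint ``first bit is $0$'' can only make termination happen sooner, never later. Consequently, no new combinatorial argument is required beyond what already appears in the proof of \cref{thm:bound_strong}, and the two arithmetic substitutions above complete the proof.
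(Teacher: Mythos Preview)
Your proposal is correct and follows essentially the same approach as the paper's proof: both extract from the proof of \cref{thm:bound_strong} the maximum length of $\vec{\delta}_\mathrm{in}$ that fails to trigger termination, then translate that length bound through the $\vec{\delta}\mapsto\vec{\delta}'$ transformation in each branch of \cref{pro:weak_any_t} (shift by $-1$ when $\vec{s}_1\neq 0$ with $t_\mathrm{in}=t-1$, shift by $+1$ when $\vec{s}_1=0$ with $t_\mathrm{in}=t$) and finally convert to a round count. Your treatment is in fact slightly more explicit than the paper's in acknowledging that termination may come via the ``$t$ non-overlapping $11$'s'' condition rather than a usable zero substring, and in noting that the leading-zero constraint in the $\vec{s}_1=0$ branch can only help.
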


\begin{proof}
	Suppose that \cref{alg:find_usable} is run with $t_\mathrm{in}$ and $\vec{\delta}_\mathrm{in}$. From the proof of \cref{thm:bound_strong} (provided in \cref{sec:thm2_proof}), the maximum length of $\vec{\delta}_\mathrm{in}$ such that no usable zero substring is found by \cref{alg:find_usable} is $\left(\frac{t_\mathrm{in}+3}{2}\right)^2-3$ when $t_\mathrm{in}$ is odd (or $\left(\frac{t_\mathrm{in}+2}{2}\right)\left(\frac{t_\mathrm{in}+4}{2}\right)-3$ when $t_\mathrm{in}$ is even). Let $\vec{\delta}=\delta_1\delta_2\dots\delta_m$ be the difference vector obtained from repeated syndrome measurements. First, consider the case that $\vec{s}\neq 0$ in which $t_\mathrm{in}=t-1$ and $\vec{\delta}_\mathrm{in}=\vec{\delta}'=\delta_2\dots\delta_m$ ($\mathrm{length}(\vec{\delta}')=\mathrm{length}(\vec{\delta})-1$). if $t$ is even (or $t$ is odd), the maximum length of $\vec{\delta}'$ with no usable zero substring is $\left(\frac{t+2}{2}\right)^2-3$ (or $\left(\frac{t+1}{2}\right)\left(\frac{t+3}{2}\right)-3$). That is, a usable zero substring exists in $\vec{\delta}'$ when the length of $\vec{\delta}$ is $\left(\frac{t+2}{2}\right)^2-1$ (or $\left(\frac{t+1}{2}\right)\left(\frac{t+3}{2}\right)-1$). This is always achievable when $\left(\frac{t+2}{2}\right)^2$ (or $\left(\frac{t+1}{2}\right)\left(\frac{t+3}{2}\right)$) rounds of full syndrome measurements are performed.
	
	Next, consider the case that $\vec{s}= 0$ in which \cref{alg:find_usable} is run with $t_\mathrm{in}=t$ and $\vec{\delta}_\mathrm{in}=\vec{\delta}'=0\delta_1\delta_2\dots\delta_m$ ($\mathrm{length}(\vec{\delta}')=\mathrm{length}(\vec{\delta})+1$). If $t$ is even (or $t$ is odd), the maximum length of $\vec{\delta}'$ with no usable zero substring is $\left(\frac{t+2}{2}\right)\left(\frac{t+4}{2}\right)-3$ (or $\left(\frac{t+3}{2}\right)^2-3$). That is, a usable zero substring exists in $\vec{\delta}'$ when the length of $\vec{\delta}$ is $\left(\frac{t+2}{2}\right)\left(\frac{t+4}{2}\right)-3$ (or $\left(\frac{t+3}{2}\right)^2-3$). This is always achievable by performing $\left(\frac{t+2}{2}\right)\left(\frac{t+4}{2}\right)-2$ (or $\left(\frac{t+3}{2}\right)^2-2$) rounds of full syndrome measurements.
	
	In any case, the syndrome corresponding to a usable zero substring found by \cref{alg:find_usable} can be used to perform error correction as described earlier in this section.
\end{proof}

Similar to the protocols in \cref{sec:strong_EC}, we do not have to complete the full syndrome measurements in the very last round of \cref{pro:weak_t_1} or \cref{pro:weak_any_t}; we can check the difference between the syndromes of the two latest rounds ($i$ and $i-1$) more frequently, and run \cref{alg:find_usable} to find a syndrome suitable for error correction or stop as soon as we are certain that $\delta_{i-1}=1$ (the syndromes $\vec{s}_{i}$ and $\vec{s}_{i-1}$ differ by at least one bit). Also, for a CSS code in which $X$-type and $Z$-type errors can be corrected separately, we can further reduce the total number of stabilizer generators; after a syndrome for correcting errors of $X$-type or $Z$-type is found, the minimum number of occurred faults can be calculated and used to find a syndrome for correcting errors of another type. See the technique proposed in \cref{sec:strong_EC} for more details.

\section{Decoder Comparison}
\label{sec:comparison}




Compared to the traditional Shor FTEC scheme, our FTEC schemes with adaptive decoders require fewer rounds of syndrome measurements in the worst-case scenario. Because the fault-tolerant threshold is related to the number of fault combinations that can cause a logical error (which is related to the total number of gates in the whole protocol), we expect to see a higher threshold when the number of rounds is reduced. In this section, we compare our adaptive decoders with the Shor decoder both analytically and numerically.

\subsection{Improvement of the lower bound of the fault-tolerant threshold for a concatenated code}
\label{subsec:bound}

Consider an FTEC protocol that uses a stabilizer code of distance $d=2t+1$ with Shor syndrome extraction circuits. Let $L$ be the number of locations (a single qubit preparation, a 1-qubit or 2-qubit gate, or a single qubit measurement) in each round of full syndrome measurements, and suppose that each location can fail with probability $p$. Also, let $r_1$ and $r_2$ be the numbers of rounds required in the worst-case scenario for the protocols with the traditional Shor decoder and the adaptive strong decoder from \cref{sec:strong_EC}. Because both protocols are strongly $t$-fault tolerant, error correction can fail only when there are at least $t+1$ faults in each protocol.

Let us first consider the protocol with the traditional Shor decoder. Pessimistically assuming that the protocol fails every time when $t+1$ faults occur, the number of fault combinations that can cause the protocol to fail is $\binom{r_1L}{t+1}$. The logical error rate $p^{(1)}$ when the quantum data is encoded once satisfies,
\begin{equation}
	p^{(1)}\leq \binom{r_1L}{t+1} p^{t+1}. \label{eq:th_lv1}
\end{equation}
Let $p_{T,1} = \binom{r_1L}{t+1}^{-1/t}$. \cref{eq:th_lv1} can be rewritten as,
\begin{equation}
	\frac{p^{(1)}}{p_{T,1}}\leq \left(\frac{p}{p_{T,1}}\right)^{t+1}.
\end{equation}

Suppose that the FTEC protocol for a concatenated code is constructed by recursively replacing each qubit in the protocol by a block of code and replacing each physical gate by a logical gate, the logical error rate $p^{(m)}$ when the quantum data is encoded $m$ times satisfies,
\begin{equation}
	\frac{p^{(m)}}{p_{T,1}}\leq \left(\frac{p^{(m-1)}}{p_{T,1}}\right)^{t+1} \leq \left(\frac{p}{p_{T,1}}\right)^{(t+1)^m}.
\end{equation}
The logical error rate $p^{(m)}$ can be suppressed to an arbitrarily small value by increasing the level of concatenation whenever $p \leq p_{T,1}$; that is, $p_{T,1}$ is the fault-tolerant threshold for a concatenated code when the traditional Shor scheme is used. However, in practice not all combinations of $t+1$ faults cause the protocol to fail, so $p_{T,1} = \binom{r_1L}{t+1}^{-1/t}$ is actually a \emph{lower bound} of the fault-tolerant threshold for the traditional Shor scheme.

Using similar analysis, a lower bound of the fault-tolerant threshold for the adaptive strong decoder is $p_{T,2} = \binom{r_2L}{t+1}^{-1/t}$. We find that
\begin{equation}
	\left(\frac{p_{T,2}}{p_{T,1}}\right)^t = \frac{\binom{r_1L}{t+1}}{\binom{r_2L}{t+1}} = \frac{(r_1L)(r_1L-1)\cdots(r_1L-t)}{(r_2L)(r_2L-1)\cdots(r_2L-t)}.
\end{equation}
Since $(r_1L-x)/(r_2L-x) \geq (r_1L)/(r_2L)$ for any $x \in \{0,\dots,t\}$ when $r_1 \geq r_2$, the following holds;
\begin{equation}
	\left(\frac{p_{T,2}}{p_{T,1}}\right)^t \geq \left(\frac{r_1}{r_2}\right)^{t+1}.
\end{equation}
That is,
\begin{equation}
	p_{T,2} \geq p_{T,1}\left(\frac{r_1}{r_2}\right)^{1+\frac{1}{t}}. \label{eq:improved_bound}
\end{equation}
This implies that the lower bound of the fault-tolerant threshold for a concatenated code can be improved when the adaptive strong decoder is used instead of the Shor scheme. The improvement factor becomes larger as the distance of the base code for concatenation increases, and it reaches a factor of $4$ as $t\rightarrow \infty$.

\begin{figure*}[tbp]
	\centering
	\includegraphics[width=0.7\textwidth]{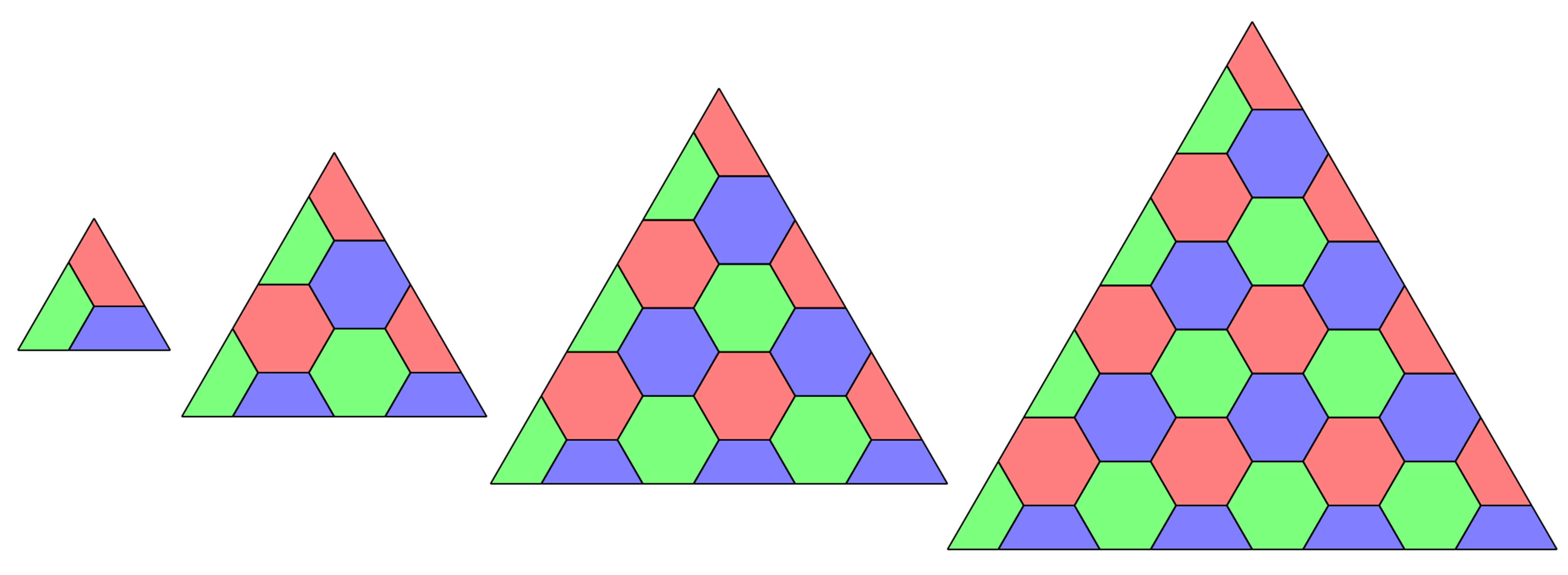}
	\caption{Hexagonal color codes of distance 3, 5, 7, and 9.}
	\label{fig:CC_d3-9}
\end{figure*}

It should be noted that the improvement factor of the actual fault-tolerant threshold can be lower than the factor in \cref{eq:improved_bound} since the ratio of the average numbers of rounds is most likely lower than the ratio of the numbers of rounds in the worst-case scenarios. Also, the analysis above is not applicable to families of codes in which a code of high distance can be obtained without concatenation, such as topological codes. In the next section, we present numerical simulations of hexagonal color codes which show that our adaptive strong and adaptive weak decoders can improve the pseudothreshold and the average number of rounds.


\subsection{Improvement of the pseudothreshold and the average number of rounds for hexagonal color codes}
\label{subsec:num_threshold}

To verify that our adaptive schemes are fault tolerant and have some advantages over the traditional Shor scheme, we simulate FTEC protocols on the hexagonal color codes of distance $3,5,7,$ and $9$. The hexagonal color code of distance $d$ is a \codepar{(3d^2+1)/4,1,d} code \cite{BM06}. It is a CSS code \cite{CS96,Steane96b} in which $X$-type and $Z$-type generators have the same form. Hexagonal color codes of distance $3,5,7,$ and $9$ are illustrated in \cref{fig:CC_d3-9}.

Our simulations consider the circuit-level depolarizing noise model as described below. 
\begin{enumerate}
	\item Each 1-qubit gate is followed by $X$, $Y$, or $Z$ error with probability $p/3$ each, i.e., the 1-qubit symmetric depolarizing noise with error probability $p$.
	\item Each 2-qubit gate is followed by a 2-qubit Pauli error of the form $P_1\otimes P_2$ (where $P_1,P_2 \in \{I,X,Y,Z\}$ and $P_1\otimes P_2 \neq I\otimes I$) with probability $p/15$ each.
	\item Ancilla qubits for a measurement of weight-$w$ generator are initially prepared in a cat state of the form $\left(\left|0\right\rangle^{\otimes w}+\left|1\right\rangle^{\otimes w}\right)/\sqrt{2}$ and each qubit is subjected to the 1-qubit symmetric depolarizing noise with error probability $p$.
	\item After each qubit measurement, the classical outcome is subjected to a bit-flip error with probability $p$.
	\item There is no idling (or wait-time) error.
\end{enumerate}

Each stabilizer generator is measured using the Shor syndrome extraction circuit. After each round of full syndrome measurements, the difference vector is calculated. Syndrome measurements are performed until the difference vector satisfies the stopping condition of the decoder being used (which is the Shor, the adaptive strong, or the adaptive weak decoder). Once a usable syndrome is found, an EC operator is obtained by the minimum-weight decoder for a hexagonal color code, and it is applied to the output error from the last round of syndrome measurements. Finally, an ideal error correction is applied, and we verify whether the output error is a logical error.

(In our implementation of the Shor decoder, syndrome measurements stop when either (1) the syndromes are repeated $t+1$ times in a row, or (2) the total number of rounds reaches $(t+1)^2$. Then, we use the syndrome from the last round for error correction. Note that the case in which the first condition is not satisfied but the number of rounds exceeds $(t+1)^2$ only happens when the total number of faults is more than $t$. The second condition is introduced to prevent a simulation blowup.)

\begin{table*}[tbp]
	\begin{center}
		\begin{tabular}{| c | c | c | c | c | c | c | c |}
			\hline
			\multirow{2}{*}{Distance} & \multirow{2}{*}{Decoder} & Pseudothreshold & \multicolumn{5}{|c|}{Average number of rounds at error rate $p$} \\
			\cline{4-8}
			& & ($\times 10^{-4}$) & $p=10^{-4}$ & $p=10^{-3}$ & $p=10^{-2}$ & $p=10^{-1}$ & $\;\;\;p=1\;\;\;$ \\
			\hline 
			\multirow{3}{*}{$d=3$} & Shor & $4.12 \pm 0.90$ & $2.02$ & $2.17$ & $3.24$ & $3.95$ & $3.96$ \\
			\cline{2-8}
			& Strong & $3.88 \pm 0.74$ & $2.01$ & $2.12$ & $2.69$ & $2.98$ & $2.98$ \\
			\cline{2-8}
			&Weak & $16.4 \pm 5.0$ & $1.01$ & $1.06$ & $1.45$ & $1.98$ & $1.99$ \\
			\hline
			\multirow{3}{*}{$d=5$} & Shor & $3.28 \pm 0.25$ & $3.13$ & $4.55$ & $8.98$ & $9.00$ & $9.00$ \\
			\cline{2-8}
			& Strong & $4.25 \pm 0.48$ & $3.07$ & $3.64$ & $4.96$ & $5.00$ & $5.00$ \\
			\cline{2-8}
			&Weak & $5.48 \pm 1.12$ & $2.05$ & $2.47$ & $3.86$ & $4.00$ & $4.00$ \\
			\hline
			\multirow{3}{*}{$d=7$} & Shor & $1.96 \pm 0.07$ & $4.52$ & $11.05$ & $16.00$ & $16.00$ & $16.00$ \\
			\cline{2-8}
			& Strong & $3.59 \pm 0.25$ & $4.26$ & $5.78$ & $7.00$ & $7.00$ & $7.00$ \\
			\cline{2-8}
			&Weak & $4.05 \pm 0.35$ & $3.22$ & $4.56$ & $5.99$ & $6.00$ & $6.00$ \\
			\hline
			\multirow{3}{*}{$d=9$} & Shor & $1.19 \pm 0.01$ & $6.50$ & $23.45$ & $25.00$ & $25.00$ & $25.00$ \\
			\cline{2-8}
			& Strong & $2.75 \pm 0.10$ & $5.69$ & $8.22$ & $9.00$ & $9.00$ & $9.00$ \\
			\cline{2-8}
			&Weak & $2.99 \pm 0.12$ & $4.63$ & $7.03$ & $8.00$ & $8.00$ & $8.00$ \\
			\hline
		\end{tabular}
	\end{center}
	\caption{Pseudothresholds and the average number of rounds for the hexagonal color codes of distance $d=3,5,7,9$ when the Shor, the adaptive strong, or the adaptive weak decoder is applied.}
	\label{tab:sim_results}
\end{table*}

In our numerical simulations, we generate the syndrome extraction circuits using Cirq \cite{Cirq22} and use Stim's Pauli frame simulator \cite{Gidney21} as a C++ library to sample from them. On our plots, the data points are calculated from between $5\times 10^8$ and $10^5$ samples at lower error rates, and at high error rates, we stop the sampling when 1000 out of all samples resulted in a logical error. The plots are generated with Sinter \cite{Gidney21}. Each data point is the number of logical errors divided by the number of samples. 
As the experiments follow the binomial distribution, this ratio is an estimator of the logical error rate $p_L$. 
Using Bayesian hypothesis testing, Sinter \cite{Gidney21} calculates the maximum and the minimum possible values of $p_L$, and the bounds are represented with the shaded area. Any other distribution with $p_L$ outside these limits is highly unlikely, quantified by the Bayes factor of $10^3$. The software developed for the simulations in this work is similar to the one presented in \cite{PTHB23}, where the technical and implementation details of similar decoders are discussed.

\begin{figure*}[tbp]
	\centering
	\includegraphics[width=0.9\textwidth]{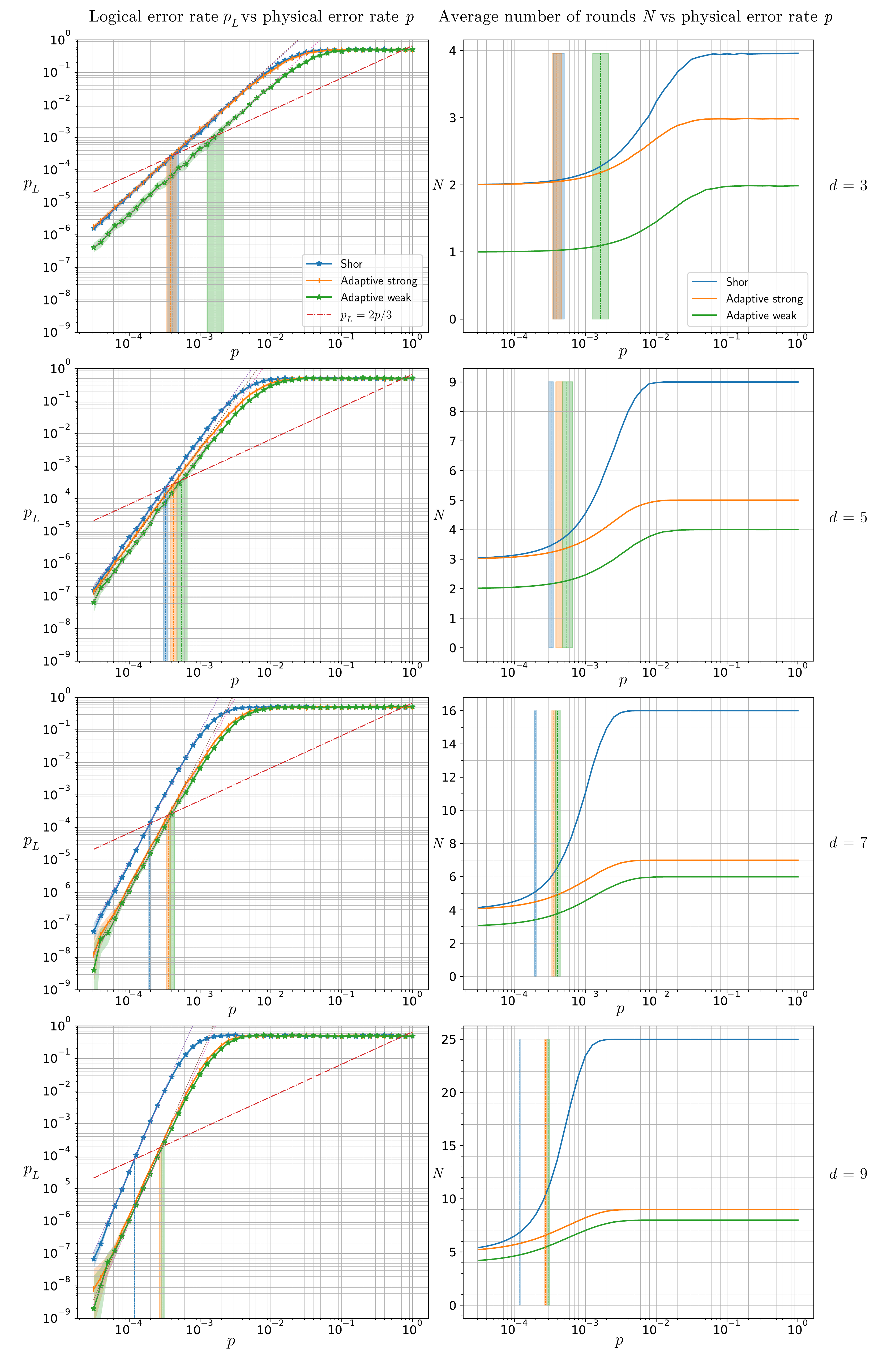}
	\caption{The logical error rate and the average number of rounds at each physical error rate for the hexagonal color codes of distance $d=3,5,7,9$ when the Shor, the adaptive strong, or the adaptive weak decoder is applied. The vertical lines indicate the pseudothresholds.}
	\label{fig:plots}
\end{figure*}

Our numerical results are shown in \cref{fig:plots} and \cref{tab:sim_results}.
From the plots of logical error rates $p_L$ versus physical error rates $p$ for each hexagonal color code (\cref{fig:plots} left), we observe that all plots for the code of distance $d=2t+1$ are parallel to $p_L=p^{t+1}$ at low error rates. This means that the Shor, the adaptive strong, and the adaptive weak decoders are $t$-fault tolerant, i.e., the code distance is preserved. 
Furthermore, both adaptive strong and adaptive weak decoders can increase the pseudothreshold\footnote{The pseudothreshold is the physical error rate in which the $p_L(p)$ curve intersects with the $p_L=2p/3$ line.} of each code (except for $d=3$ in which the performances of the Shor and the adaptive strong decoders are very similar). 
As the code distance grows, the difference in the performances between the Shor and the adaptive strong decoders becomes larger. In contrast, the difference between the adaptive strong and the adaptive weak decoders becomes smaller.

From the plots of average number of rounds $N$ versus physical error rates $p$ for each hexagonal color code (\cref{fig:plots} right), we find that the average number of rounds of the Shor and the adaptive strong decoders are similar when $p$ is much lower than the pseudothresholds, but the difference becomes more noticeable as $p$ increases. We also observe that the average number of rounds of the adaptive weak decoder are roughly 1 round fewer than that of the adaptive strong decoder for the entire range of $p$.


Moreover, we observe an interesting behavior of our adaptive decoders at the high physical error rate regime. For the hexagonal color code of distance $d=2t+1$, the average number of rounds for the adaptive strong and the adaptive weak decoders approach $2t+1$ and $2t$, respectively. In contrast, the number for the Shor decoder approaches $(t+1)^2$. This is because the chance of observing repeated syndromes becomes exponentially small at high error rates. Thus, an average is dominated by the case that all bits of the difference vector are ones, which causes the Shor, the strong, and the weak decoders to stop at $(t+1)^2$, $2t+1$, and $2t$ rounds, respectively. Since $N$ monotonically increases as $p$ increases in every case, the observation suggests that FTEC on a code of distance $d$ with our adaptive strong (or weak) decoder can be achieved by repeating no more than $d$ (or $d-1$) rounds of syndrome measurements on average. We also expect this result to hold for any stabilizer code.

\section{Discussions and conclusions}
\label{sec:discussion}

\begin{table*}[tbp]
	\begin{center}
		\begin{tabular}{| c | c | c | c | c | c | c | c | c | c | c |}
			\hline
			\multicolumn{2}{|c|}{\multirow{2}{*}{Protocol}} & \multicolumn{9}{|c|}{The maximum number of rounds} \\ \cline{3-11}
			\multicolumn{2}{|c|}{} & $t=1$ & $t=2$ & $t=3$ & $t=4$ & $t=5$ & $t=6$ & $t=7$ & $t=8$ & $t=9$ \\
			\hline
			\multicolumn{2}{|c|}{Strong FTEC} & 3 & 5 & 8 & 11 & 15 & 19 & 24 & 29 & 35  \\
			\hline
			\multirow{2}{*}{Weak FTEC} & if $\vec{s}_1\neq 0$ & 2 & 4 & 6 & 9 & 12 & 16 & 20 & 25 & 30 \\ \cline{2-11}
			& if $\vec{s}_1=0$ & 1 & 4 & 7 & 10 & 14 & 18 & 23 & 28 & 34 \\
			\hline
			\multicolumn{2}{|c|}{Traditional Shor \cite{Shor96}} & 4 & 9 & 16 & 25 & 36 & 49 & 64 & 81 & 100  \\
			\hline
		\end{tabular}
	\end{center}
	\caption{The maximum numbers of rounds of full syndrome measurements required for FTEC protocols satisfying the strong FTEC conditions (\cref{def:strong_FT}), the weak FTEC conditions (\cref{def:weak_FT}), and the traditional Shor FTEC protocol when applying to a stabilizer code that can correct up to $t=1,\dots,9$ errors.}
	\label{tab:results_t1-10}
\end{table*}

In this work, we present FTEC schemes which are improved versions of the Shor FTEC scheme. Our protocols measure the error syndromes repeatedly using the Shor extraction circuits, then the difference vector is calculated from the differences of syndromes between any two consecutive rounds. Afterwards, a syndrome for error correction is determined by the pattern of the difference vector. We call this kind of technique an adaptive measurement technique since the condition to stop repeated syndrome measurements changes dynamically depending on the measurement results. This is in contrast to the traditional Shor FTEC scheme, where syndrome measurements are done until the results are repeated $t+1$ times in a row regardless of the measurement results. 

Our protocols with the adaptive strong and the adaptive weak decoders (which satisfy the strong and the weak FTEC conditions in \cref{def:strong_FT,def:weak_FT}) are developed in \cref{sec:strong_EC,sec:weak_EC}, respectively. The protocol with the adaptive strong decoder can be applied to a concatenated code by replacing each qubit (and each physical gate) in the protocol with a code block (and the corresponding logical gate). On the other hand, the protocol with the adaptive weak decoder does not support code concatenation but can be applied to code families in which a code of higher distance can be obtained without code concatenation. The maximum number of rounds required for the protocols with the strong and the weak decoders are proved in \cref{thm:bound_strong,thm:bound_weak}. For a stabilizer code that can correct up to $t$ errors, the protocol with the strong decoder requires no more than $(t+3)^2/4-1$ rounds of syndrome measurements, while the protocol with the weak decoder requires no more than $(t+3)^2/4-2$ rounds (or no more than $(t+2)^2/4$ rounds) if the syndrome from the first round is trivial (or nontrivial). The maximum numbers of required rounds for $t=1,\dots,9$ for each of our protocols are compared in \cref{tab:results_t1-10}.


In \cref{subsec:bound}, we analytically show that compared to the Shor decoder, our adaptive strong decoder can improve the lower bound of the fault-tolerant threshold when applied to a concatenated code. The improvement factor for the bound approaches $4$ times when the distance of the base code for concatenation approaches infinity. However, it should be noted that the number of rounds used in our analysis is the number of rounds for the worst-case scenario, not the average number of rounds. Thus, the improvement factor for the actual threshold might be lower than $4$ times as the worst-case scenario is very unlikely. In \cref{subsec:num_threshold}, we present numerical simulations of FTEC protocols with the Shor, the adaptive strong, and the adaptive weak decoders on the hexagonal color codes of distance $d=3,5,7,9$. The results show that the protocols with adaptive decoders preserve the code distance and can indeed improve the pseudothreshold. Interestingly, we also find that while the maximum number of rounds for each of our adaptive decoders grows quadratically as the code distance grows, the maximum of the average number of rounds grows linearly (in contrast to the Shor decoder in which both numbers grow quadratically). The results suggest that our FTEC protocols with adaptive decoders applied to any stabilizer code of distance $d$ require no more than $d$ rounds of syndrome measurements on average.


We expect that the adaptive measurement technique from \cref{sec:strong_EC,sec:weak_EC} could be applicable to other fault-tolerant protocols in which eigenvalues of Pauli operators are measured using circuits similar to the Shor syndrome extraction circuit, and the ancilla measurement faults are handled by repeated measurements. An extension of our adaptive decoders to flag FTEC is developed in \cite{PTHB23}. Other possible protocols that could be improved by our adaptive scheme are fault-tolerant protocols for operator measurements \cite{NC00}, and for code-switching \cite{PR13,ADP14,Bombin15,KB15}. However, one has to make sure that the protocols are modified in a way that the conditions for fault tolerance for such tasks are satisfied. A careful analysis is required, thus, we leave this for future work.

Note that in our adaptive measurement technique, each bit of the difference vector $\delta_i$ indicates the difference of \emph{the whole syndromes} $\vec{s}_{i-1}$ and $\vec{s}_i$. One possible research direction would be seeing how an FTEC protocol can be improved by comparing each pair of bits of $\vec{s}_{i-1}$ and $\vec{s}_i$. Another possible direction would be improving the protocol by comparing syndromes from non-consecutive rounds (for example, $\vec{s}_{i-2}$ and $\vec{s}_i$).  
We point out that the standard FTEC method for a surface code of distance $d=2t+1$ performs $d$ rounds of syndrome measurements \cite{DKLP02}. If our adaptive measurement technique is modified using the aforementioned ideas, hopefully, we can obtain an FTEC protocol that the maximum number of rounds (not the maximum average number of rounds) grows linearly in $d$ similar to the FTEC protocol for a surface code, but is applicable to any stabilizer code. It might be harder to classically compute a syndrome for error correction from the measurement results, but the amount of required quantum resources may decrease, making the FTEC protocol more practical.

Last, we point out that the FTEC protocols developed in this work are applicable to any stabilizer code; we did not use any specific code structure in our development. It has been shown in \cite{DR20} that when tailored for specific families of codes, FTEC protocols with adaptive syndrome measurements can be further improved. We are hopeful that the FTEC protocols developed in this work could be further optimized for some families of codes as well.

\section{Acknowledgements}

We thank Shilin Huang and other members of Duke Quantum Center for helpful discussions on protocol development and possible future directions. The work was supported by the Office of the Director of National Intelligence - Intelligence Advanced Research Projects Activity through an Army Research Office contract (W911NF-16-1-0082), the Army Research Office (W911NF-21-1-0005), and the National Science Foundation Institute for Robust Quantum Simulation (QLCI grant OMA-2120757).

\appendix

\section{Proof of Theorem 2}
\label{sec:thm2_proof}

Let $\vec{\delta}$ be a difference vector obtained from repeated syndrome measurements, and suppose that there are at most $t$ faults in the FTEC protocol. First, consider the case that all bits in $\vec{\delta}$ are ones. The repeated syndrome measurements in \cref{pro:strong_any_t} will stop when $\mathrm{length}(\vec{\delta})=2t$, i.e., the number of rounds is $2t+1$. In this case, there are $I(2)$, $I(4)$, ..., $I(2t)$ faults, thus the syndrome $\vec{s}_{2t+1}$ is correct and can be used for error correction.

Next, consider the case that at least one bit in $\vec{\delta}$ is zero. We will show that the maximum length of $\vec{\delta}$ such that a usable zero substring does not exist (i.e., \cref{alg:find_usable} cannot return any usable zero substring when $t_\mathrm{in}=t$ and $\vec{\delta}_\mathrm{in}=\vec{\delta}$) is $\left(\frac{t+3}{2}\right)^2-3$ when $t$ is odd, and is $\left(\frac{t+2}{2}\right)\left(\frac{t+4}{2}\right)-3$ when $t$ is even. If this is true, we know that whenever $\mathrm{length}(\vec{\delta})=\left(\frac{t+3}{2}\right)^2-2$ and $t$ is odd (or $\mathrm{length}(\vec{\delta})=\left(\frac{t+2}{2}\right)\left(\frac{t+4}{2}\right)-2$ and $t$ is even), at least one usable zero substring must be found by \cref{alg:find_usable}. The number of rounds that gives such a difference vector is $\left(\frac{t+3}{2}\right)^2-1$ when $t$ is odd (or $\left(\frac{t+2}{2}\right)\left(\frac{t+4}{2}\right)-1$ when $t$ is even). Note that $\left(\frac{t+3}{2}\right)^2-1 \geq 2t+1$ for any $t$, and $\left(\frac{t+2}{2}\right)\left(\frac{t+4}{2}\right)-1\geq 2t+1$ when $t\geq 2$.

Let $\vec{\delta}=\eta_1 1 \eta_2 1 \dots 1 \eta_c$ for some positive integer $c$, where $\eta_i$ ($i=1,\dots,c$) is a substring of the form $00\dots0$ (the length of $\eta_i$ can be zero). Also, let $\alpha_j$, $\beta_j$, $\gamma_j$ be defined as in \cref{thm:condition_for_usable} for each $\eta_j$ with positive length, $p$ be the total number of $\eta_j$'s with positive length, and $q$ be the total number of ones in $\vec{\delta}$. If all $\eta_j$'s with positive length are unusable, then $\alpha_j+\beta_j+\gamma_j \leq t-1$ for any $j$ by \cref{thm:condition_for_usable}. 

For each $p$ and $q$, we will try to find an arrangement of zero substrings in $\vec{\delta}$ such that $\alpha_j+\beta_j$ is at the minimum for all $\eta_j$ with positive length (so that all $\gamma_j$'s and the length of $\vec{\delta}$ are at the maximum). This can be done by placing $\eta_j$'s as close to one another as possible (any other arrangement with the same $p$ and $q$ will give $\vec{\delta}$ with the same or smaller length). In this proof, we will consider the following three major arrangements.

Case 1: there are no $\eta_1$ and $\eta_c$. $\vec{\delta}$ will be in the form,
\begin{equation}
	\underbrace{1 \; \eta_2 \; 1 \; \eta_3 \; 1 \dots 1 \; \eta_{p+1} \; 1}_{(p+1) \text{\;ones}} \underbrace{1\;1 \dots 1 \; 1}_{(q-p-1) \text{\;ones}} \nonumber
\end{equation}

(1.a) If $q-p-1 \geq 0$ is even, the constraint for each $\gamma_j$ is,
\begin{equation}
	\alpha_j+\beta_j+\gamma_j = p-1+\frac{q-p-1}{2} +\gamma_j \leq t-1, \nonumber
\end{equation}
or equivalently,
\begin{equation}
	\gamma_j \leq t-\frac{p}{2}-\frac{q}{2}+\frac{1}{2}\;\text{when} \;j=1,\dots,p. \nonumber
\end{equation}
Let $f(p,q)$ be the maximum length of $\vec{\delta}$ for each $p$ and $q$. In this case,
\begin{equation}
	f(p,q)=\left(t-\frac{p}{2}-\frac{q}{2}+\frac{1}{2}\right)p+q, \label{eq:1}
\end{equation}	
which is valid when $p \geq 1$.

(1.b) If $q-p-1 \geq 1$ is odd, the constraint for each $\gamma_j$ is,
\begin{align}
	\gamma_j &\leq t-\frac{p}{2}-\frac{q}{2}+1\;\text{when} \;j=2,\dots,p, \nonumber\\
	\gamma_{p+1} &\leq t-\frac{p}{2}-\frac{q}{2}. \nonumber
\end{align}
In this case,
\begin{align}
	f(p,q)=&\left(t-\frac{p}{2}-\frac{q}{2}+1\right)(p-1) \nonumber \\
	&+\left(t-\frac{p}{2}-\frac{q}{2}\right)+q. \label{eq:2}
\end{align}
which is valid when $p \geq 1$.

Case 2: there is $\eta_1$ but no $\eta_c$. $\vec{\delta}$ will be in the form,
\begin{equation}
	\underbrace{\eta_1 \; 1 \; \eta_2 \; 1 \; \eta_3 \; 1 \dots 1 \; \eta_p \; 1}_{p \text{\;ones}} \underbrace{1\;1 \dots 1 \; 1}_{(q-p) \text{\;ones}} \nonumber
\end{equation}
(2.a) if $q-p \geq 0$ is even, the constraint for each $\gamma_j$ is,
\begin{align}
	\gamma_{1} &\leq t-\frac{p}{2}-\frac{q}{2}, \nonumber\\
	\gamma_j &\leq t-\frac{p}{2}-\frac{q}{2}+1\;\text{when} \;j=2,\dots,p. \nonumber
\end{align}
In this case,
\begin{align}
	f(p,q)=&\left(t-\frac{p}{2}-\frac{q}{2}+1\right)(p-1) \nonumber\\
	&+\left(t-\frac{p}{2}-\frac{q}{2}\right)+q. \nonumber
\end{align}
which is valid when $p \geq 1$. Note that this equation is the same as \cref{eq:2}.

(2.b) if $q-p \geq 1$ is odd, the constraint for each $\gamma_j$ is,
\begin{align}
	\gamma_{j} &\leq t-\frac{p}{2}-\frac{q}{2}+\frac{1}{2}\;\text{when}\;j=1,p, \nonumber\\
	\gamma_j &\leq t-\frac{p}{2}-\frac{q}{2}+\frac{3}{2}\;\text{when}\;j=2,\dots,p-1. \nonumber
\end{align}
If $p=1$, $f(1,q)=\left(t-\frac{q}{2}\right)+q$, which is similar to $f(p,q)$ in \cref{eq:1} with $p=1$. If $p\geq 2$, we have,
\begin{align}
	f(p,q)=&\left(t-\frac{p}{2}-\frac{q}{2}+\frac{3}{2}\right)(p-2) \nonumber \\
	&+2\left(t-\frac{p}{2}-\frac{q}{2}+\frac{1}{2}\right)+q. \label{eq:3}
\end{align}

Case 3: there are both $\eta_1$ and $\eta_c$. $\vec{\delta}$ will be in the form,
\begin{equation}
	\underbrace{\eta_1 \; 1 \; \eta_2 \; 1 \; \eta_3 \; 1 \dots 1 \; \eta_{p-1} \; 1}_{(p-1) \text{\;ones}} \underbrace{1\;1 \dots 1 \; 1}_{(q-p+1) \text{\;ones}}\; \eta_c \nonumber
\end{equation}
Note that this case is possible only when $p\geq 2$.

(3.a) If $q-p+1 \geq 0$ is even, the constraint for each $\gamma_j$ is,
\begin{align}
	\gamma_{j} &\leq t-\frac{p}{2}-\frac{q}{2}+\frac{1}{2}\;\text{when}\;j=1,c, \nonumber\\
	\gamma_j &\leq t-\frac{p}{2}-\frac{q}{2}+\frac{3}{2}\;\text{when}\;j=2,\dots,p-1. \nonumber
\end{align}
In this case,
\begin{align}
	f(p,q)=&\left(t-\frac{p}{2}-\frac{q}{2}+\frac{3}{2}\right)(p-2) \nonumber \\
	&+2\left(t-\frac{p}{2}-\frac{q}{2}+\frac{1}{2}\right)+q, \nonumber
\end{align}
which is valid when $p \geq 2$. Note that this equation is the same as \cref{eq:3}.

(3.b) If $q-p+1 \geq 1$ is odd, the constraint for each $\gamma_j$ is,
\begin{align}
	\gamma_{j} &\leq t-\frac{p}{2}-\frac{q}{2}+1\;\text{when}\;j=1,p-1, \nonumber\\
	\gamma_j &\leq t-\frac{p}{2}-\frac{q}{2}+2\;\text{when}\;j=2,\dots,p-2, \nonumber\\
	\gamma_{c} &\leq t-\frac{p}{2}-\frac{q}{2}. \nonumber
\end{align}
If $p=2$, $f(2,q)=2\left(t-\frac{1}{2}-\frac{q}{2}\right)+1+q$, which is similar to $f(p,q)$ in \cref{eq:2} with $p=2$. If $p\geq 3$, we have,
\begin{align}
	f(p,q)=&\left(t-\frac{p}{2}-\frac{q}{2}+2\right)(p-3) \nonumber\\
	&+2\left(t-\frac{p}{2}-\frac{q}{2}+1\right) \nonumber \\
	&+\left(t-\frac{p}{2}-\frac{q}{2}\right)+q. \label{eq:4}
\end{align}

Next, we will find the maximum value of $f(p,q)$ from each of \cref{eq:1,eq:2,eq:3,eq:4}. In any case, we find that $\frac{\partial f(p,q)}{\partial q}=-\frac{p}{2}+1$. Therefore,
\begin{enumerate}
	\item if $p=1$, the maximum value of $f(1,q)$ is attained at the largest possible value of $q$;
	\item if $p=2$, $f(2,q)$ is the same for all possible values of $q$;
	\item if $p\geq 3$, the maximum value of $f(p,q)$ is attained at the smallest possible value of $q$.
\end{enumerate}
Note that for \cref{eq:1,eq:3}, $p$ and $q$ must have different parities. In contrast, for \cref{eq:2,eq:4}, $p$ and $q$ must have the same parity. 

When $p=1$ or $2$, the maximum value of $f(p,q)$ is $2t-1$. When $p\geq 3$, The maximum value of $f(p,q)$ in each case is as follows:
\begin{enumerate}
	\item From \cref{eq:1}, $f(p_\mathrm{max},q_\mathrm{max})=\left(\frac{t+1}{2}\right)^2+1$ (or $\frac{t}{2}\left(\frac{t}{2}+1\right)+1$) at $q_\mathrm{max}=p_\mathrm{max}+1$ and $p_\mathrm{max}=\frac{t+1}{2}$ when $t$ is odd (or $p_\mathrm{max}=\frac{t}{2}$ when $t$ is even).
	\item From \cref{eq:2}, $f(p_\mathrm{max},q_\mathrm{max})=\left(\frac{t+1}{2}\right)\left(\frac{t+3}{2}\right)-1$ (or $\left(\frac{t}{2}+1\right)^2-1$) at $q_\mathrm{max}=p_\mathrm{max}$ and $p_\mathrm{max}=\frac{t+1}{2}$ when $t$ is odd (or $p_\mathrm{max}=\frac{t}{2}+1$ when $t$ is even).
	\item From \cref{eq:3}, $f(p_\mathrm{max},q_\mathrm{max})=\left(\frac{t+3}{2}\right)^2-3$ (or $\left(\frac{t}{2}+1\right)\left(\frac{t}{2}+2\right)-3$) at $q_\mathrm{max}=p_\mathrm{max}-1$ (which is possible in Case 3.a) and $p_\mathrm{max}=\frac{t+3}{2}$ when $t$ is odd (or $p_\mathrm{max}=\frac{t}{2}+1$ when $t$ is even).
	\item From \cref{eq:4}, $f(p_\mathrm{max},q_\mathrm{max})=\left(\frac{t+3}{2}\right)^2-4$ (or $\left(\frac{t}{2}+1\right)\left(\frac{t}{2}+2\right)-4$) at $q_\mathrm{max}=p_\mathrm{max}$ and $p_\mathrm{max}=\frac{t+3}{2}$ when $t$ is odd (or $p_\mathrm{max}=\frac{t}{2}+1$ when $t$ is even).
\end{enumerate}
Combining all possible cases, the maximum value of $f(p,q)$ which is the maximum length of $\vec{\delta}$ with no usable zero substrings is $\left(\frac{t+3}{2}\right)^2-3$ (or $\left(\frac{t+2}{2}\right)\left(\frac{t+4}{2}\right)-3$). When $t$ is odd, an explicit form of $\vec{\delta}$ of the maximum length is $\eta_1 1 \eta_2 1\ \dots 1 \eta_{\frac{t+3}{2}}$ with $\gamma_1,\gamma_{\frac{t+3}{2}} = \frac{t-1}{2}$ and $\gamma_j = \frac{t+1}{2}$ ($j=2,\dots,\frac{t-1}{2}$). When $t$ is even, an explicit form of $\vec{\delta}$ of the maximum length is $\eta_1 1 \eta_2 1\ \dots 1 \eta_{\frac{t}{2}+1}$ with $\gamma_1,\gamma_{\frac{t}{2}+1} = \frac{t}{2}$ and $\gamma_j = \frac{t}{2}+1$ ($j=2,\dots,\frac{t}{2}$). 

For this reason, performing syndrome measurements $\left(\frac{t+3}{2}\right)^2-1$ rounds when $t$ is odd (or $\left(\frac{t+2}{2}\right)\left(\frac{t+4}{2}\right)-1$ rounds when $t$ is even) can guarantee that a usable zero substring exists. The error correction can be done using the syndrome corresponding to the usable zero substring.

\section{Classical time complexity of Algorithm 1}
\label{sec:complexity}

In this section, we analyze the classical time complexity of \cref{alg:find_usable} which is called after each round of syndrome measurements in both adaptive strong and adaptive weak decoders. From \cref{thm:bound_strong}, the length of the difference vector $\vec{\delta}$ calculated after any round of syndrome measurements is at most $\left(\frac{t+3}{2}\right)^2-2$ when $t$ is odd (or at most $\left(\frac{t+2}{2}\right)\left(\frac{t+4}{2}\right)-2$ when $t$ is even). For each $\eta_j$ in $\vec{\delta}$, $\alpha_j$, $\beta_j$, and $\gamma_j$ are calculated by counting the number of non-overlapping 11 substrings plus the number of remaining one bits in $\vec{\delta}$, and the number of zeros in $\eta_j$. These calculations can be done by sweeping through the entire length of $\vec{\delta}$, so the number of required operations is linear in the length of $\vec{\delta}$. After the calculations, the condition $\alpha_j+\beta_j+\gamma_j \geq t$ which requires a constant number of operations is checked. Thus, computing whether each $\eta_j$ is usable requires no more than $K_1\left[\left(\frac{t+3}{2}\right)^2-2\right]+K_2$ operations where $K_1,K_2$ are some constant numbers.

For any $t$, the total number of one bits in $\vec{\delta}$ is at most $2t$ since each fault causes at most two ones in $\vec{\delta}$. Because $\vec{\delta}$ is of the form $\eta_1 1 \eta_2 1 \dots 1 \eta_c$, the maximum number of $\eta_j$ is $2t+1$. Therefore, total number of operations in \cref{alg:find_usable} is no more than $K_1\left[\left(\frac{t+3}{2}\right)^2-2\right](2t+1)+K_2(2t+1) = O(t^3)$.

\bibliographystyle{quantum}
\bibliography{bibtex_TPB23}

\end{document}